\documentclass{amsart}
\usepackage{amsmath, amssymb, fullpage, amsthm, mathrsfs, verbatim, epsf, graphicx, multirow, array, color}
\usepackage[all]{xy}

\newcommand{\Q}{\mathbb{Q}}

\newcommand{\R}{\mathbb{R}}
\newcommand{\N}{\mathbb{N}}

\newcommand{\mcN}{\mathcal{N}}

\newcommand{\id}{\mathbb{I}}
\newcommand{\h}{\mathcal{H}}
\newcommand{\A}{\mathcal{A}}
\newcommand{\tbx}{\textbf{x}}
\newcommand{\tbf}{\textbf{f}}

\newcommand{\D}{\partial}
\newcommand{\ds}{\frac{d}{ds}}
\newcommand{\dsi}[1]{\frac{d^{#1}}{ds^{#1}}}
\newcommand{\Spec}{\textrm{Spec }}

\def\ba #1\ea{\begin{align} #1 \end{align}}
\def\bas #1\eas{\begin{align*} #1 \end{align*}}
\def\bml #1\eml{\begin{multline} #1 \end{multline}}
\def\bmls #1\emls{\begin{multline*} #1 \end{multline*}}

\def\acl{\blacktriangleright\hspace{-4pt}\vartriangleleft}

\newtheorem{thm}{Theorem}
\newtheorem{lem}{Lemma}
\newtheorem{cor}{Corollary}

\theoremstyle{remark}
\newtheorem{eg}{Example}

\theoremstyle{definition}
\newtheorem{dfn}{Definition}
\newtheorem{rem}{Remark}

\begin{document}
\author{Susama Agarwala}
\address{Oxford University \\
Mathematical Institute\\
Radcliff Observatory Quarter\\
Woodstock Road\\
Oxford \\
UK}

\author{Colleen Delaney}
\address{University of California Santa Barbara \\
South Hall, Room 6607\\
University of California,\\
Santa Barbara, CA 93106\\
USA}

\title{Generalizing the Connes Moscovici Hopf algebra to contain all
  rooted trees}

\begin{abstract}
This paper defines a generalization of the Connes-Moscovici Hopf
algebra, $\h(1)$ that contains the entire Hopf algebra of rooted
trees. A relationship between the former, a much studied object in
non-commutative geometry, and the latter, a much studied object in
perturbative Quantum Field Theory, has been established by Connes and
Kreimer. The results of this paper open the door to study the
cohomology of the Hopf algebra of rooted trees.
\end{abstract}

\maketitle

\keywords{Hopf algebra, renormalization, Connes-Moscovici Hopf algebra}

\subjclass[2010]{16Y30, 81T75}

Connes and Moscovici define a family of Hopf algebras, $\h(n)$, on
$n$-dimensional flat manifolds in \cite{CMos98}. The action of these
Hopf algebras on the group of smooth functions on the frame
bundle, crossed with local diffeomorphism on $M$, is a well studied
object in non-commutative geometry.

In \cite{CK98}, Connes and Kreimer delineate a remarkable set of
similarities between the Hopf algebra of rooted trees, $\h_{rt}$,
which is important to understanding the perturbative regularization of Quantum
Field Theories (QFTs), and $\h(1)$. The Hopf algebra $\h(1)$ is defined
by a Lie algebra generated by two vector fields on the orientation
preserving frame bundle $F^+M$, $X$ and $Y$, and a set of linear
operators $ \{\delta_i | i \in \N\}$. The linear operators correspond
to a sub Hopf algebra of $\h_{rt}$, the vector field $Y$ corresponds
to the grading operator on $\h_{rt}$ and the vector field $X$ to the
natural growth operator on $\h_{rt}$.

While a lot is known about the cohomologies of $\h(n)$, little is
explicitly known about the cohomology classes of $\h_{rt}$. In
particular, since the Lie algebra associated to the Lie group $\Spec
\h_{rt}$ is freely generated, the Hochschild cohomologies
$HH^\bullet(\h_{rt}) = 0$ for $\bullet >1$. There is a one to one
correspondence between generators of $HH^1(\h_{rt})$ and primitive
elements of $\h_{rt}$ \cite{BK06}. Many primitive elements are known,
but only one element of $HH^1\h_{rt}$.

The one known Hochschild one co-cycle, the grafting
operator $B_+$, however, plays an important role in understanding the
combinatorics underlying the Dyson Schwinger equations for
perturbative QFTs \cite{BK06, K06dse, KY06}. Knowing more about
cohomology of $\h_{rt}$ will lead to a better understanding of the
process of perturbative regularization of QFTs, as well as the deep
connection between non-commutative geometry and renormalization of the
same.  Framing the full Hopf algebra of rooted trees in the context of
$\h(1)$ is a first step in this direction.

This paper introduces a generalization of $\h(1)$, $\h_{rt}(1)$, that
contains all of $\h_{rt}$ as a sub Hopf algebra. Key to this
construction is an observation by Cayley in 1881 relating rooted
trees to a particular first order differential equation. We use this
to generalize the natural growth operator on $\h_{rt}$, and thus to
define a family of vector fields $X_t$ on $F^+M$, corresponding to the
generalized natural growth operators. The commutation of these vector
fields $X_t$ with the element $\delta_1$ in $\h(1)$ gives a family of
linear operators $\delta_t$, each corresponding to a rooted tree $t
\in \h_{rt}$. In order to build this Hopf algebra $\h_{rt}(1)$, we
have to relax the requirement that the manifold $M$ is flat.

This paper is organized as follows. Section \ref{treesanddifs} recalls
the construction of Cayley, and the relationship between differentials
and rooted trees. Section \ref{hrt} recalls the definition of
$\h_{rt}$, and the natural growth operator. We generalize the natural
growth operator, and show that all of $\h_{rt}$ can be generated by
repeated application of the generalized natural growth operator to the
single vertex tree. Section \ref{CMpict} recalls the definition of
$\h(1)$ and constructs the generalized Hopf algebra $\h_{rt}(1)$.

\section{A relationship between rooted trees and differentials \label{treesanddifs}}

A relationship between rooted trees and differentials has been known
since Cayley's work in 1881 \cite{Cayley}. This relationship is
fundamental to Butcher's work to solve differential equations using
the Runge Kutta method \cite{Butcherbook}. For a brief but clear
exposition of the results in this section, see
\cite{BrouderRenorm}. Let $\tbx(s)$ be a vector field and \ba
\ds\tbx(s) = \tbf(\tbx) \quad \; \quad \tbx(0) = \tbx_0
\;.\label{ODE}\ea the system of differential equation of interest to
Butcher. The Butcher group for this system of differential equations
is the Lie group defining the Hopf algebra of rooted
trees. Specifically, let $\h_{rt}$ be the Hopf algebra of rooted
trees, defined in section \ref{hrt} of this paper. Let $G$ be the
Butcher group for this differential equation. Then $\h_{rt}$ is the
graded dual of the universal enveloping algebra of the Lie algebra of
$G$ \bas \h_{rt} = \mathcal{U}^\vee(\textrm{Lie}(G)) \;. \eas The
group $G$ defines the combinatorics at the heart of $\h_{rt}$. For
more details, see \cite{Dur}. For a detailed exposition on the
relationship between the Butcher group and the Hopf algebra of rooted
trees, see \cite{BrouderRenorm}.

For the rest of this paper, we consider differential equations of the
form \eqref{ODE}. For now, let $\tbx(s)$ be a smooth vector field \bas
\tbx(s): \R \rightarrow \R^n\;. \eas In section \ref{CMpict}, we will
generalize these arguments from $\R^n$ to more general manifolds.

\begin{dfn}
Write $\D_i = \frac{\partial}{\partial x^i}$. We use the shorthand
$\D_{i_1\ldots i_k}f(\tbx) := \D_{i_1}\D_{i_2}\ldots \D_{i_k}f(\tbx)$.
\end{dfn}

In the context where $\tbx(s)$ is a vector field on $\R^n$, the
following correspondence exists between derivations of $\tbx(s)$ and
non-planar rooted trees.

\begin{dfn}
Rooted trees are non-planar simply connected, oriented graphs with a marked
vertex, called the root. All the edges of $t$ are oriented away from
the root.
\end{dfn}

In this case, rooted trees are drawn root vertex up, indicated by a
circle around the root vertex.

Cayley associates sums of rooted trees to the derivative
$\dsi{n}\tbx(s)$ as follows. \begin{enumerate}\item The expression
  $\dsi{n}\tbx(s)$ is written in terms of sums of products of
  functions of the form $\D_{i_1\ldots i_k}f^j(\tbx)$ by the chain
  rule and equation \eqref{ODE}. Each such summand has a total of
  $n-1$ derivatives. \item Each appearance of the term $f^j(\tbx)$
  corresponds to a vertex of the rooted tree. \item Each appearance of
  a $\D_i$ corresponds to an edge flowing from its associated vertex,
  connecting the vertex the derivative acts upon to another vertex
  indexed by the same $i$.\end{enumerate}

The table below gives a few examples of this procedure.

\begin{tabular}{m{6cm}|c}
$\ds\tbx(s) = \tbf(\tbx)$ & $\xy \POS(0,0) *+{\bullet} *\cir{}
  \POS(3,0) *+{f^j}\endxy$ \\ $\dsi{2}\tbx(s) = \ds\tbf(\tbx) = D
  \tbf(\tbx) \cdot \ds \tbx(t) = \D_i f^j(\tbx) \ds x^i(s) $ & $\xy
  \POS(0,4) *+{\bullet} *\cir{} \POS(6,4) *+{\D_if^j} \POS(0,4)
  \ar@{-} + (0, -8) \POS(0, -4) *+{\bullet} \POS(2, -4) *+{f^i} \endxy
  $\\ $\dsi{3}\tbx(s) = \ds( \D_i f^j(\tbx)) f^i + \D_if^j \ds f^i =
  (\D_{ik}f^j) f^kf^i + (\D_if^j)(\D_kf^i) f^k $ & $\xy \POS(0,4)
  *+{\bullet} *\cir{} \POS(6,4) *+{\D_{ik}f^j} \POS(0,4) \ar@{-} +
  (-4, -8) \POS(-4, -4) *+{\bullet} \POS(-6, -4) *+{f^i} \POS(0,4)
  \ar@{-} + (4, -8) \POS(4, -4) *+{\bullet} \POS(6, -4) *+{f^k} \endxy
  + \xy \POS(0,8) *+{\bullet} *\cir{} \POS(6,8) *+{\D_if^j} \POS(0,8)
  \ar@{-} + (0, -8) \POS(0, 0) *+{\bullet} \POS(6, 0) *+{\D_kf^i}
  \POS(0,0) \ar@{-} + (0, -8) \POS(0, -8) *+{\bullet} \POS(2, -8)
  *+{f^k}\endxy $ \\ $\dsi{4}\tbx(s) =(\D_{ikl}f^j) f^lf^kf^i +
  2(\D_{ik}f^j)(\D_lf^i)f^lf^k + (\D_{li}f^j)(\D_kf^i)f^l f^k +
  (\D_if^j)(\D_{lk}f^i) f^lf^k + (\D_if^j)(\D_kf^i) (\D_lf^k)f^l$ &
  $\xy \POS(0,4) *+{\bullet} *\cir{} \POS(6,4) *+{\D_{ikl}f^j}
  \POS(0,4) \ar@{-} + (-4, -8) \POS(-4, -4) *+{\bullet} \POS(-6, -4)
  *+{f^i} \POS(0,4) \ar@{-} + (4, -8) \POS(4, -4) *+{\bullet} \POS(6,
  -4) *+{f^k} \POS(0,4) \ar@{-} + (0, -8) \POS(0, -4) *+{\bullet}
  \POS(-2, -4) *+{f^l} \endxy + 3 \xy \POS(0,8) *+{\bullet} *\cir{}
  \POS(6,8) *+{\D_{ik}f^j} \POS(0,8) \ar@{-} + (-4, -8) \POS(-4, 0)
  *+{\bullet} \POS(-8, 0) *+{\D_lf^i} \POS(-4,0) \ar@{-} + (0, -8)
  \POS(-4, -8) *+{\bullet} \POS(-6, -8) *+{f^l} \POS(0,8) \ar@{-} +
  (4, -8) \POS(4, -0) *+{\bullet} \POS(6, 0) *+{f^k} \endxy + \xy
  \POS(0,8) *+{\bullet} *\cir{} \POS(6,8) *+{\D_if^j} \POS(0,8)
  \ar@{-} + (0, -8) \POS(0, 0) *+{\bullet} \POS(6, 0) *+{\D_{lk}f^i}
  \POS(0,0) \ar@{-} + (-4, -8) \POS(-4, -8) *+{\bullet} \POS(-6, -8)
  *+{f^k} \POS(0,0) \ar@{-} + (4, -8) \POS(4, -8) *+{\bullet} \POS(6,
  -8) *+{f^l}\endxy + \xy \POS(0,12) *+{\bullet} *\cir{} \POS(6,12)
  *+{\D_if^j} \POS(0,12) \ar@{-} + (0, -8) \POS(0, 4) *+{\bullet}
  \POS(6, 4) *+{\D_kf^i} \POS(0,4) \ar@{-} + (0, -8) \POS(0, -4)
  *+{\bullet} \POS(6, -4) *+{\D_lf^k} \POS(0,-4) \ar@{-} + (0, -8) \POS(0, -12)
  *+{\bullet} \POS(2, -12) *+{f^l} \endxy $

\end{tabular}

Notice that the $n^{th}$ derivative of $\tbx(s)$ with respect to $s$
gives rise to a linear combination of trees with $n$ vertices. In fact, it
is a linear combination of all the trees with $n$ vertices.

For a fixed differential equation as in \eqref{ODE} above, following
Butcher \cite{Butcherbook}, one can associate to any rooted tree $t$ a
function of $f$ and its derivatives, \bas \phi : \h_{rt} &\rightarrow
C^\infty(\R^n, \R^n) \;. \eas In order to make this map
explicit, we define some notation.

\begin{dfn}
\begin{enumerate}
\item Let $V(t)$ be the vertex set of a tree, and $E(t)$ the set of
edges.
\item The \emph{fertility} of a vertex, $v$, is the number of edges for
which $v$ is the initial vertex. Write this $fert(v)$.
 \item If there is an edge flowing from a vertex $v$ to a vertex $w$,
  we say that $w$ is the daughter of $v$, or that $v$ is the parent of $w$.
\item The only vertex on a connected rooted tree without a parent is
  the root. Any vertex with fertility $0$ are leaf vertices.
\end{enumerate}
\end{dfn}

To fix notation, define $\ds x^i = f^i = \phi^i(\bullet)$. The map
$\phi$ is defined componentwise on rooted trees. To calculate
$\phi^j(t)$, assign an index $i_v$ to each vertex $v \in V(t)$. If $r$
is the root vertex of $t$, then $i_r = j$. Then
\ba \phi^j(t) = \prod_{v\in V(t)} \left( \prod_{w \textrm{ daughter of } v} \D_{i_w} \right) \phi^{i_v}(\bullet) \label{phimap} \;.\ea For instance,
given a tree \bas t = \xy \POS(0,4) *+{\bullet} *\cir{} \POS(0,4)
\ar@{-} + (-4, -8) \POS(-4, -4) *+{\bullet} \POS(0,4) \ar@{-} + (4,
-8) \POS(4, -4) *+{\bullet} \POS(4, -4) \ar@{-} + (0, -8) \POS(4, -12)
*+{\bullet} \endxy\eas one has \bas \phi^k(t) =
 (\D_{ij}\phi^k(\bullet)) (\D_l\phi^i(\bullet)) \phi(\bullet)^j\phi(\bullet)^l \;. \eas

Given a fixed differential equation, one can view the map $\phi$ as
assigning a differential operator to the tree $t$.

\begin{dfn}
Let $t$ be a rooted tree, with root vertex $r$. Write \bas \phi_t =
\left[\prod_{v\in V(t)\setminus r} \left( \prod_{w \textrm{ daughter
      of } v} \D_{i_w} \right) \phi^{i_v}(\bullet)\right]\prod_{w
  \textrm{ daughter of } r } \D_{i_w}\;. \eas
\label{phidifop}
\end{dfn}

This identifies a family of differential operators indexed by rooted
trees. Each of these differential operators defines a linear map on
smooth functions in the obvious way.

In particular, in this notation, $\phi(t)^i = \phi_t
(\phi(\bullet)^i)$. Later in the paper, we replace the function $f^j$
decorating the root vertex of $\phi(t)^j$ with a different function,
and consider functions of the form \ba \phi_t(h) =\left[\prod_{v\in
    V(t)\setminus r} \left( \prod_{w \textrm{ daughter of } v}
  \D_{i_w} \right) \phi^{i_v}(\bullet)\right]\prod_{w \textrm{
    daughter of } r } \D_{i_w} h \label{phimapgen} \; ,\ea for $h \in
C^\infty(\R^n)$. In particular, $\phi_\bullet(h) = h$.

\section{The Hopf algebra of rooted trees \label{hrt}}

Let $\h_{rt}$ be the Hopf algebra of rooted trees. This Hopf algebra
is defined in detail in \cite{CK98} and summarized here. Consider the
vector field generated by all rooted trees $t$ \bas \Q \langle t | t
\textrm{ rooted tree} \rangle \;. \eas This is a unital algebra with
the unit representing the empty tree, $1 = \id_{\h_{rt}} =
t_\emptyset$. Multiplication on this algebra is commutative, and
corresponds to the disjoint union of trees. The product, or disjoint union, of non-trivial trees is called a forest.

The algebra $\h_{rt}$ is graded by the number of vertices in the
tree \bas \h_{rt} = \bigoplus_{i=0}^\infty \h_{rt,i} ;\quad \h_{rt,0}
= \Q ;\quad \h_{rt, i} = \Q \langle \{t| t \textrm{ tree or forest
  with $i$ vertices} \} \rangle \;. \eas

There is a grading operator $Y$ on $\h_{rt}$ such that for a single
tree with $|V(t)| = n$ vertices, $t \in \h_{rt,n}$, $Y(t)= nt$. Before
defining the coproduct structure on $\h_{rt}$, we define a few terms.

\begin{dfn}
A proper admissible cut of a tree $t$ is a collection of edges, $c \in
E(t)$, such that any path from a root vertex to a leaf vertex contains
at most one such edge.
\end{dfn}

Removing the edges corresponding to an admissible cut of $t$ gives
a set of subtrees of $t$.

\begin{dfn}
The subtree of $t$ associated to a proper admissible cut $c$ that
contains the root vertex of $t$ is the root subtree defined by $c$,
$R_c(t)$. The other set of subtrees is the pruned forest defined by $c$,
$P_c(t) = \prod_i P_{c,i}(t)$. Each $P_{c,i}(t)$ is a single
tree in this forest. The root vertex of each $P_{c,i}(t)$ is defined
by the orientation of $t$.
\end{dfn}

The set of admissible cuts of a tree consists of the proper admissible
cuts and two trivial cuts, one such that $R_c(t) = t$ (called the
\emph{empty cut}), and one such that $P_c(t) = t$ (called the
\emph{full cut}).

The bialgebra structure on $\h_{rt}$ is given by a coproduct \bas
\Delta(t) = \sum_{c \textrm{ admis. cut}}P_c(t) \otimes R_c(t)\;. \eas
This coproduct is compatible with multiplication on $\h_{rt}$. Let $c$
be an admissible cut of $t$ and $c'$ an admissible cut of $t'$. Then
\bas \Delta(t t') = \sum_{c \textrm{ admis. cut}} \sum_{c' \textrm{
    admis. cut}} P_c(t) P_{c'}(t') \otimes R_c(t) R_{c'}(t) \;. \eas
This is a grading preserving coassociative coproduct. The co-unit is given
by \bas \varepsilon(t) = \begin{cases} t & \hbox{$t \in \h_{rt,0}$;}
  \\ 0 & \hbox{else.} \end{cases}\eas

Thus defined, $\h_{rt}$ is a connected graded bialgebra over a
commutative ring, $\Q$. Therefore, it is a Hopf algebra. The antipode
is given by \bas S(t) = -t -\sum_{c \textrm{ proper admis. cut}}
P_c(t) S(R_c(t))\;. \eas

The map $\phi$ defined in \eqref{phimap} is an algebra
homomorphism. For a forest $t t'$, \bas \phi^i(tt') = \phi^i(t)
\phi^i(t') \;.\eas Furthermore \bas \Delta \phi^i(t) = (\phi^i \otimes
\phi^i) \Delta t \;. \eas

\subsection{Two operators on $\h_{rt}$\label{NB}}

In \cite{CK98} the authors identify two operators on $\h_{rt}$ of importance to the analysis in this paper. The first, called a natural growth operator, denoted $N$, is a derivation on $\h_{rt}$. The other, called a grafting operator, denoted $B_+$, is linear operator on $\h_{rt}$, and a Hochschild one-cocycle \cite{BroadK}. In this section, we review these two operators.

In \cite{CK98}, the authors define a sub Hopf algebra of $\h_{rt}$
generated by elements formed by repeated application of the natural
growth operator on the single vertex tree.

\begin{dfn}
The natural growth operator \bas N : \h_{rt, i} \rightarrow \h_{rt,
  i+1}\eas maps a tree, $t$, to a sum of trees, $N(t)$, formed by
increasing the fertility of each vertex of $t$ by one. Each tree in the sum
has one more vertex than $t$. \end{dfn}

For instance, \bas N(\xy \POS(0,4) *+{\bullet} *\cir{} \POS(0,4)
\ar@{-} + (-4, -8) \POS(-4, -4) *+{\bullet} \POS(0,4) \ar@{-} + (4,
-8) \POS(4, -4) *+{\bullet} \POS(4, -4) \ar@{-} + (0, -8) \POS(4, -12)
*+{\bullet} \endxy ) = \xy \POS(0,8) *+{\bullet} *\cir{} \POS(0,8)
\ar@{-} + (-4, -8) \POS(-4, 0) *+{\bullet} \POS(0,8) \ar@{-} + (4, -8)
\POS(4, 0) *+{\bullet} \POS(4, 0) \ar@{-} + (0, -8) \POS(4, -8)
*+{\bullet} \POS(4, -8) \ar@{-} + (0, -8) \POS(4, -16) *+{\bullet}
\endxy + \xy \POS(0,4) *+{\bullet} *\cir{} \POS(0,4) \ar@{-} + (-4,
-8) \POS(-4, -4) *+{\bullet} \POS(0,4) \ar@{-} + (4, -8) \POS(4, -4)
*+{\bullet} \POS(4, -4) \ar@{-} + (4, -8) \POS(8, -12) *+{\bullet}
\POS(4, -4) \ar@{-} + (-4, -8) \POS(0, -12) *+{\bullet} \endxy+  \xy \POS(0,4) *+{\bullet} *\cir{} \POS(0,4) \ar@{-} + (-4,
-8) \POS(-4, -4) *+{\bullet} \POS(0,4) \ar@{-} + (4, -8) \POS(4, -4)
*+{\bullet} \POS(4, -4) \ar@{-} + (0, -8) \POS(4, -12) *+{\bullet}
\POS(-4, -4) \ar@{-} + (0, -8) \POS(-4, -12) *+{\bullet} \endxy + \xy
\POS(0,4) *+{\bullet} *\cir{} \POS(0,4) \ar@{-} + (-4, -8) \POS(-4,
-4) *+{\bullet} \POS(0, 4) \ar@{-} + (0, -8) \POS(0, -4) *+{\bullet}
*+{\bullet} \POS(0,4) \ar@{-} + (4, -8) \POS(4, -4) *+{\bullet}
\POS(4, -4) \ar@{-} + (0, -8) \POS(4, -12) *+{\bullet} \endxy \eas

This is a derivation on $\h_{rt}$. In particular, for $t$, $s$, two rooted tress, \bas N(t + s) = N(t) + N(s) \;.\eas It can be extended to act on forests: $N (t_1 \ldots t_n)$ gives a sum of forests defined by increasing the valence each vertex in the \emph{forest} by one. In particular, for $t$ and $s$ rooted trees, \bas N(ts) = N(t)s + tN(s) \;.\eas

Natural growth is closely related to $B_+$, the grafting operator. This linear operator maps
a forest to a single tree formed by connecting each root vertex of the
forest to a new root vertex. Write this operator \bas B_+ :
\h_{rt} \rightarrow \h_{rt}\;. \eas On a forest $t_1t_2$, this is
defined \bas B_+(t_1 t_2) = \xy \POS(0,4) *+{\bullet} *\cir{}
\POS(0,4) \ar@{-} + (-4, -8) \POS(0,4) \ar@{-} + (+4, -8) \POS(-4,-4)
*+{\bullet} \POS(4,-4) *+{\bullet} \POS(-5, -5) *+{t_1} \POS(6, -5)
*+{t_2}\endxy \;.\eas Specifically, if $t_1 = t_2 = \bullet$, then\bas
B_+(\bullet \bullet) = \xy \POS(0,4) *+{\bullet} *\cir{} \POS(0,4)
\ar@{-} + (-4, -8) \POS(0,4) \ar@{-} + (+4, -8) \POS(-4, -4)
*+{\bullet} \POS(4, -4) *+{\bullet}\endxy \;.\eas This can be extended
by linearity to all of $\h_{rt}$.

Notice that any rooted tree $t \in \h_{rt, m}$ can be written as a grafting of a product of trees $B_+(t_1\ldots t_n)$, with $t_1\ldots t_n \in \h_{rt, m-1}$. There is a unique admissible cut, $c$, of $t$ such that $R_c$ is a one valent tree. The argument of $B_+$ is the pruned forest defined by this cut, $P_c = t_1\ldots t_n$.

\begin{rem}
The operator $B_+$ is a Hochschild one-cocycle on $\h_{rt}$ \cite{BK06}. This is seen from the coproduct of the $B_+$ operator on forests: \ba \Delta B_+(t_1\ldots t_n) = (\id \otimes B_+) \Delta(t_1\ldots t_n) + B_+(t_1\ldots t_n) \otimes 1 \;.\label{Bcoprod}\ea
\end{rem}

Using the notation of section \ref{treesanddifs}, given an initial value
problem of the form \eqref{ODE}, we associate differential operators to the natural growth and grafting operators.

\begin{lem}
Given a fixed ODE, applying natural growth to a rooted tree, $t$, corresponds to taking a derivative by the parameter $s$ of the function $\phi^i(t)$:  \bas \phi^i(N(t)) = \ds
\phi^i(t)\;.\eas \label{stdnatgrowth}\end{lem}

\begin{proof}
From equation \eqref{phimap}, \bas \phi(t)^j =\prod_{v\in V(t)} \left(
\prod_{w \textrm{ daughter of } v} \D_{i_w} \right)
\phi(\bullet)^{i_v} \;.\eas Writing the operator $\ds =
\phi(\bullet)^i \D_i $, we see that \ba \ds \phi(t)^j =\phi(\bullet)^i
\D_i \left[\prod_{v\in V(t)} \left( \prod_{w \textrm{ daughter of } v}
  \D_{i_w} \right) \phi^{i_v}(\bullet)\right] \label{dsphit}\;.\ea

The natural growth operator gives an extra leaf daughter to each
vertex. Give these new vertices the index $u'$ if $u$ is the parent
vertx. Therefore, assign to each new leaf, the function
$\phi^{i_{v'}}$. Under this notation, \ba \phi^i(N(t)) =\sum_{u \in
  V(t)} \phi^{i_{u'}}(\bullet) \prod_{v\in V(t)}
\D_{i_{u'}}\left(\prod_{w \textrm{ daughter of } v} \D_{i_w}
\right)\phi^{i_v}(\bullet) \label{phiNt}\ea

Comparing equations \eqref{dsphit} and \eqref{phiNt} shows that the
latter is exactly the former after the product rule has been applied.

\end{proof}

As a result, write \ba \phi_N = \phi(\bullet)^i \D_i \;. \label{Ndiffop} \ea
In particular, this implies that \bas
\phi(\delta_k)^i = \dsi{k}x^i(s) \;.\eas

Unlike the case for natural growth, which is a derivation on $\h_{rt}$, $B_+$ is not a derivation on that Hopf Algebra. Therefore, there is not a natural derivation on the vector fields $\phi(t)^i$ defined by $B_+$. The linear operator, however, does define a simplification of Definition \ref{phidifop}.

For $t$ a rooted tree, write $t = B_+(t_1 \ldots t_n)$, \ba \phi^i(t) = \phi^i(B_+(t_1 \ldots t_n)) = (\prod_{j=1}^n \phi_{t_j})\phi^i(\bullet) \;. \ea Similarly, the operator $\phi_t$ from definition \ref{phidifop} can be written in terms of $B_+$: \ba
\phi_t = \prod_{j=1}^n \phi_{t_j}(\phi^{i_j}(\bullet) ) \prod_{j=1}^n
\D_{i_j} \;. \label{Bphinotation}\ea

Finally, we note the differential operator associated to $N_t(t')$, for $t$ and $t'$ rooted trees.

\begin{lem}
 Composing the differential operators $\phi_{N_t}$ and $\phi_{t'} $
 gives \bas \phi_{N_t} \phi_{t'} = \phi_{N_t(t')}
 \;.\eas \label{natgrowthgen}
\end{lem}

\begin{proof}
Let $t' = B_+(t_1\ldots t_n)$. From equations \eqref{phinatgrowth} and
\eqref{Bphinotation}, \bas \phi_{N_t} \phi_{t'} = \phi^k(t) \D_k
\left(\prod_{j= 1}^n \phi_t(\phi^{i_j}(\bullet)) (\prod_{j= 1}^n
\D_{i_j})\right)  = \\ \phi^k(t) \D_k (\prod_{j= 1}^n
\phi_t(\phi^{i_j}(\bullet))) (\prod_{j= 1}^n \D_{i_j})  + \phi^k(t)
(\prod_{j= 1}^n \phi_t(\phi^{i_j}(\bullet))) (\D_k \prod_{j= 1}^n
\D_{i_j}) \;.\eas By equation \eqref{NBrel} and Definition
\ref{phidifop}, the second line evaluates to \bas \sum_{i =
  1}^n\phi_{B_+(t_1\ldots N_t(t_i) \ldots t_n)} + \phi_{B_+(t t_1
  \ldots t_n)} = \phi_{N_t(t')}\;. \eas
\end{proof}

\subsection{Properties and generalizations of $N$ and $B_+$ \label{NBgen}}

On the level of trees, there is no reason to limit natural growth
only to growth by one vertex. One can grow a tree by any other tree. This realization is key to the analysis in this paper.

In this section, we develop this generalization of the natural growth operator, and explore some of its properties.

\begin{dfn}
For any tree $t \in \h_{rt}$, there is a generalized natural growth
operator \bas N_t : \h_{rt, i} \rightarrow \h_{rt,i + Y(t)} \eas that
maps a tree to a sum of trees, each formed by increasing the
fertility of a vertex by one. This additional edge connects
the vertex $v$ of the original tree to the root vertex of $t$. \end{dfn}

For instance, let \bas t = \xy \POS(0,4) *+{\bullet} *\cir{} \POS(0,4)
\ar@{-} + (-4, -8) \POS(-4, -4) *+{\bullet} \POS(0,4) \ar@{-} + (4,
-8) \POS(4, -4) *+{\bullet} \POS(4, -4) \ar@{-} + (0, -8) \POS(4, -12)
*+{\bullet} \endxy \; .\eas Then the tree \bas N_t(\delta_2) = \xy
\POS(0,12) *+{\bullet} *\cir{} \POS(0,12) \ar@{-} + (-4, -8) \POS(-4,
4) *+{\bullet} \POS(0,12) \ar@{-} + (4, -8) \POS(4, 4) *+{\bullet}
\POS(4,4) \ar@{-} + (-4, -8) \POS(0, -4) *+{\bullet} \POS(4,4) \ar@{-}
+ (4, -8) \POS(8, -4) *+{\bullet} \POS(8, -4) \ar@{-} + (0, -8)
\POS(8, -12) *+{\bullet} \endxy + \xy \POS (0,16) *+{\bullet} *\cir{}
\POS (0, 16) \ar@{-} + (0,-8) \POS(0,8) *+{\bullet} \POS (0, 8)
\ar@{-} + (0,-8) \POS(0,0) *+{\bullet} \POS(0,0) \ar@{-} + (-4, -8)
\POS(-4, -8) *+{\bullet} \POS(0,0) \ar@{-} + (4, -8) \POS(4, -8)
*+{\bullet} \POS(4, -8) \ar@{-} + (0, -8) \POS(4, -16) *+{\bullet}
\endxy \;.\eas In the first summand, the tree $t$ is hung off the root
vertex of $\delta_2$. In the second summand, it is hung off the unique
leaf vertex.

There are two special cases to be taken into consideration. Since $1
\in \h_{rt}$ corresponds to the empty tree with no vertices, \ba
N_t(1) = 0 \label{natgrowone}\;.\ea  This is as one expects, the derivation of a constant is $0$.  Similarly,
one may grow a tree $t$ by the empty tree. If one adds a trivial tree to each
vertex of $t$, this operation simply counts the number of vertices of
$t$. In other words, \ba N_1(t) = Y(t) \;. \label{natgrowbyone}\ea

The natural growth and the $B_+$ operator are related. For any tree
$t$ with root fertility $n$, there exists a product of trees $t_1 \ldots
t_n$ such that \bas t = B_+(t_1 \ldots t_{n})\;. \eas Applying
natural growth by $s$ gives \ba N_{s}(t) = B_+ (s t_1\ldots t_n) + \sum_{i=1}^n B_+ (t_1 \ldots N_s(t_i) \ldots t_n)
 \label{NBrel} \;.\ea The first summand corresponds to attaching $s$ to the root vertex of $t$. In particular, \ba N_t(\bullet) = B_+(t) \label{NBrel2} \;.\ea

Finally, we calculate the composition of the coproduct with the
generalized natural growth operator.

\begin{thm}
  Let $t, s$ be rooted trees. The coproduct of the operator $N_t$ is given by
  \bas \Delta  N_s(t) = (N_s\otimes \id) \Delta t + \sum_{c
    \textrm{ admis. cut}} (P_c(s) \otimes N_{R_c(s)}) \Delta t
  \;.\eas \label{Ntcoprod}
\end{thm}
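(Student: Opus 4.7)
The plan is to prove the identity by a direct combinatorial enumeration of admissible cuts. Using the decomposition $N_t(s)=\sum_{v\in V(s)}(s\cup_v t)$, where $s\cup_v t$ denotes the tree obtained from $s$ by attaching the root of $t$ to the vertex $v$ via a new edge (the \emph{grafting edge}), one has $\Delta N_t(s)=\sum_v \Delta(s\cup_v t)$. The task thus reduces to classifying admissible cuts of each $s\cup_v t$ and summing the resulting contributions over $v$.

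The key observation is that any admissible cut of $s\cup_v t$ given as a subset of edges decomposes into an admissible cut $c_s$ of $s$, an admissible cut $c_t$ of $t$, and a binary choice of whether to cut the grafting edge, subject to the constraint that every root-to-leaf path through the grafting edge carries at most one cut. Writing this constraint out, one finds that the only allowed configurations are: (A) the grafting edge is uncut, $v\in R_{c_s}(s)$, and $c_t$ is arbitrary; (B) the grafting edge is uncut, $v\in P_{c_s}(s)$, and $c_t$ is the empty cut; (C) the grafting edge is cut, $v\in R_{c_s}(s)$, and $c_t$ is the empty cut. The formal full cut of $s\cup_v t$, contributing $(s\cup_v t)\otimes 1$, must be accounted for separately.

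The three cases then match the right-hand side term by term. Case (A) sums to $\sum_{c_s,c_t}P_{c_s}(s)P_{c_t}(t)\otimes N_{R_{c_t}(t)}(R_{c_s}(s))$ via the identity $\sum_{v\in V(R_{c_s}(s))} R_{c_s}(s)\cup_v R_{c_t}(t)=N_{R_{c_t}(t)}(R_{c_s}(s))$, recovering all empty and proper cut contributions in $\sum_c(P_c(t)\otimes N_{R_c(t)})\Delta s$. Case (C) contributes $P_{c_s}(s)\cdot t\otimes |V(R_{c_s}(s))|\,R_{c_s}(s)=P_{c_s}(s)\cdot t\otimes N_1(R_{c_s}(s))$, matching the full-cut-of-$t$ term $(t\otimes N_1)\Delta s$ on the right. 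Finally, Case (B), together with the summed formal full cuts $\sum_v(s\cup_v t)\otimes 1=N_t(s)\otimes 1$, reproduces $(N_t\otimes \id)\Delta s$ after extending $N_t$ to forests by $N_t(P_1\cdots P_k)=\sum_i P_1\cdots N_t(P_i)\cdots P_k$, so that summing grafting-at-$v$ over $v\in V(P_1\cdots P_k)$ is exactly $N_t(P_1\cdots P_k)$.

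The main obstacle is careful bookkeeping of the trivial cuts on both sides. One must verify that the formal full cut of $t$ on the right corresponds precisely to Case (C) on the left, that the empty-cut-of-$t$ contribution matches the $c_t=\emptyset$ sub-sum of Case (A), and that the formal full cuts of the trees $s\cup_v t$ combine with Case (B) to produce exactly the boundary term $N_t(s)\otimes 1$ in $(N_t\otimes \id)\Delta s$. The identities $N_u(1)=0$ and $N_1=Y$ are precisely what make the degenerate cases align cleanly.
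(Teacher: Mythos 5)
Your proof is correct and follows essentially the same route as the paper's: decompose $N_t(s)=\sum_{v\in V(s)} s\cup_v t$, classify the admissible cuts of each grafted tree according to how they interact with the grafting edge and $E(t)$ (your cases (A)/(B)/(C) are just a finer organization of the paper's split into cuts meeting or avoiding $E(t)\cup\{vr\}$), and resum over the grafting vertex via $\sum_{v\in V(R_{c_s}(s))}R_{c_s}(s)\cup_v R_{c_t}(t)=N_{R_{c_t}(t)}(R_{c_s}(s))$. If anything, your bookkeeping is more careful than the paper's: you make explicit the derivation extension $N_t(P_1\cdots P_k)=\sum_i P_1\cdots N_t(P_i)\cdots P_k$, which is what the paper's argument implicitly uses (despite its earlier ``algebra homomorphism'' phrasing), and you track the trivial cuts and the degenerate identities $N_u(1)=0$, $N_1=Y$ cleanly.
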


\begin{proof}

Write $t = B_+(t_1\ldots t_n)$.  By equation \eqref{NBrel}, write
\bas  N_s(t) = B_+(st_1\ldots t_n) + \sum_{i=1}^n B_+ (t_1 \ldots N_s(t_i) \ldots t_n) \;.\eas Then by \eqref{Bcoprod}, \bmls \Delta N_s(t) = (\id \otimes B_+) \Delta(st_1\ldots t_n) + B_+(st_1\ldots t_n) \otimes 1 \\ + \sum_{i=1}^n (\id \otimes B_+) \Delta (t_1 \ldots N_s(t_i) \ldots t_n) + B_+ (t_1 \ldots N_s(t_i) \ldots t_n) \otimes 1 \;.\emls The first line comes from $\Delta B_+ (s t_1\ldots t_n)$, the second line from  $\Delta \sum_{i=1}^n B_+ (t_1 \ldots N_s(t_i) \ldots t_n)$. Collecting terms gives \ba \Delta N_s(t) = (\id \otimes B_+) \Delta(st_1\ldots t_n) + N_s(t) \otimes 1 + \sum_{i=1}^n (\id \otimes B_+) \Delta (t_1 \ldots N_s(t_i) \ldots t_n) \label{coprodexpand}\;.\ea The rest of this proof proceeds by induction on the number of vertices of $t$.

If $t = \bullet$, by \eqref{NBrel2}, \bas \Delta N_s(\bullet) = \Delta B_+(s) =(\id \otimes B_+)\Delta(s) + B_+(s) \otimes 1 \;.\eas Applying \eqref{NBrel2} again gives \bas \Delta N_s(\bullet) = \sum_{c
    \textrm{ admis. cut}}( P_c(s) \otimes N_{R_c(s)}) \Delta \bullet + (N_s \otimes \id) \Delta \bullet \;.\eas This proves the theorem for graphs of valence $1$.

Suppose the theorem holds for all trees with fewer than $k$ vertices. Consider $t$ to be a tree with $k$ vertices. Then the equation \eqref{coprodexpand} reads \bmls \Delta N_s(t) = N_s(t) \otimes 1 +  \sum_{c
    \textrm{ admis. cut}} P_c(s)P_c(t_1 \ldots t_n) \otimes B_+(R_c(s)R_c(t_1 \ldots t_n)) +   \\ \sum_{i=1}^n \left( P_c(t_1) \ldots  N_s(P_{c}(t_i)) \ldots P_c(t_n) \otimes B_+(R_c(t_1 \ldots t_n)) +  P_{c}(s) P_c(t_1 \ldots t_n) \otimes B_+(R_c(t_1) \ldots N_{R_{c}(s)}(t_i) \ldots \R_c(t_n))\right)\;.\emls
Combining the first summand with the third, and the second summand with the fourth (using equation \eqref{NBrel}), one gets \bas \Delta N_s(t) = (N_s \otimes \id ) \Delta t + \sum_{c \textrm{ admis. cut}} (P_c(s) \otimes N_{R_c(s)}) \Delta t \;.\eas This proves the theorem.

\end{proof}

As with natural growth by a vertex, the generalize natural growth operator on trees corresponds to a derivation.

\begin{lem}
In general, for $t$ and $t'$ rooted trees, \ba\phi^i(N_t(t')) =
\phi^j(t) \D_j(\phi^i(t'))  \label{gennatgrowth}\;.\ea
\label{gennatgrowthlem}\end{lem}

 \begin{proof}
This proof is a generalization of Lemma \ref{stdnatgrowth}, when $t =
\bullet$, and $\ds = \phi^j(\bullet) \D_j$.

 Natural growth by $t$ increases the fertility of each vertex by
 one. This is represented in the right hand side of
 \eqref{gennatgrowth} by differentiation with respect to
 $x^j(s)$. Instead of growing by a single vertex, $N_t$ grows by the
 tree $t$. This is represented by contraction with $\phi^j(t)$.
\end{proof}

\begin{cor}
Writing $Y(t) = N_1(t)$, as in \eqref{natgrowbyone}, and $\phi^i(1) = x^i(s)$, Lemma \ref{gennatgrowth} implies that \bas \phi^i(Y(t)) = x^j(s) \D_j (\phi^i(t) \;.\eas
\end{cor}

Just as in equation \eqref{phimapgen}, we associate a differential
operator to trees, in this paper, we also associate an operator to
(general) natural growth. In particular: \ba \phi_{N_t} = \phi^j(t)
\D_j\;. \label{phinatgrowth}\ea

\subsection{Generating $\h_{rt}$ using generalized natural growth operators}
In \cite{CK98}, the authors generated a sub Hopf algebra, $\h_1 \subset \h_{rt}$, of rooted trees using only natural growth by a single vertex. This Hopf algebra of rooted trees is defined by a set of generators \bas
N^{k-1}(\bullet) = \delta_k \;.\eas In loc. cit., the authors show that the algebra
$\Q[\{\delta_k | k \in \N\}]$ is a Hopf algebra.

Below are the trees for the
first few $\delta_k$. \bas \delta_1 = \xy \POS(0,4) *+{\bullet}
*\cir{} \endxy \eas \bas \delta_2 = \xy \POS(0,4) *+{\bullet} *\cir{}
\POS(0,4) \ar@{-} + (0, -8) \POS(0, -4) *+{\bullet} \endxy \eas \bas
\delta_3 = \xy \POS(0,4) *+{\bullet} *\cir{} \POS(0,4) \ar@{-} + (-4,
-8) \POS(-4, -4) *+{\bullet} \POS(0,4) \ar@{-} + (4, -8) \POS(4, -4)
*+{\bullet} \endxy + \xy \POS(0,8) *+{\bullet} *\cir{} \POS(0,8)
\ar@{-} + (0, -8) \POS(0, 0) *+{\bullet} \POS(0,0) \ar@{-} + (0, -8)
\POS(0, -8) *+{\bullet} \endxy \eas \bas \delta_4 = \xy \POS(0,4)
*+{\bullet} *\cir{} \POS(0,4) \ar@{-} + (-4, -8) \POS(-4, -4)
*+{\bullet} \POS(0,4) \ar@{-} + (4, -8) \POS(4, -4) *+{\bullet}
\POS(0,4) \ar@{-} + (0, -8) \POS(0, -4) *+{\bullet} \endxy + 3 \xy
\POS(0,8) *+{\bullet} *\cir{} \POS(0,8) \ar@{-} + (-4, -8) \POS(-4, 0)
*+{\bullet} \POS(-8, 0) \POS(-4,0) \ar@{-} + (0, -8) \POS(-4, -8)
*+{\bullet} \POS(0,8) \ar@{-} + (4, -8) \POS(4, -0) *+{\bullet} \endxy
+ \xy \POS(0,8) *+{\bullet} *\cir{} \POS(0,8) \ar@{-} + (0, -8)
\POS(0, 0) *+{\bullet} \POS(0,0) \ar@{-} + (-4, -8) \POS(-4, -8)
*+{\bullet} \POS(0,0) \ar@{-} + (4, -8) \POS(4, -8) *+{\bullet} \endxy
+ \xy \POS(0,12) *+{\bullet} *\cir{} \POS(0,12) \ar@{-} + (0, -8)
\POS(0, 4) *+{\bullet} \POS(0,4) \ar@{-} + (0, -8) \POS(0, -4)
*+{\bullet} \POS(0,-4) \ar@{-} + (0, -8) \POS(0, -12) *+{\bullet}
\endxy\eas

\begin{dfn}
Let $\h_{CK}= \Q [\{\delta_k | k \in \N\}]$ be the Hopf algebra
defined in \cite{CK98}, generated by the terms $\delta_k$. \end{dfn}

In this, and the following section, we generalize the work of \cite{CK98}, and show a general algorithm for generating sub Hopf algebras of $\h_{rt}$ by applying the generalized natural growth operators. In particular, we show that all of $\h_{rt}$ can be generated by generalized natural growth operators.

\begin{thm}
The Hopf algebra of rooted trees, $\h_{rt}$, can be generated by
elements formed by repeated application of operators of the form
$N_t$ to the tree with a single vertex. \label{NtHopfalg}
\end{thm}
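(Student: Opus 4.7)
The plan is to proceed by a double induction on the pair $(N,n)$, where $N = |V(t)|$ is the number of vertices of a rooted tree $t$ and $n$ is the fertility of its root, ordered lexicographically. Let $\mathcal{G} \subset \h_{rt}$ denote the subalgebra generated by all elements obtained from $\bullet$ by repeated applications of operators of the form $N_s$. Since $\h_{rt}$ is the polynomial algebra on rooted trees, it will suffice to show every tree $t$ lies in $\mathcal{G}$.

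I would take the base case to be $N = 1$, where $t = \bullet \in \mathcal{G}$ trivially. For the inductive step, I would write $t = B_+(t_1, \ldots, t_n)$, where the $t_i$ are the subtrees hanging below the root. When $n = 1$, identity \eqref{NBrel2} gives $t = B_+(t_1) = N_{t_1}(\bullet)$; the outer inductive hypothesis places $t_1 \in \mathcal{G}$ (as $|V(t_1)| = N-1$), hence $t \in \mathcal{G}$.

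For $n \geq 2$, my plan is to apply \eqref{NBrel} with $t_0 := t_n$ to the tree $B_+(t_1, \ldots, t_{n-1})$, and use the symmetry of $B_+$ (trees here are non-planar, so forest multiplication is commutative) to rearrange into
\[
t \;=\; N_{t_n}\bigl(B_+(t_1, \ldots, t_{n-1})\bigr) \;-\; \sum_{i=1}^{n-1} B_+\bigl(t_1, \ldots, N_{t_n}(t_i), \ldots, t_{n-1}\bigr).
\]
The first term should lie in $\mathcal{G}$ by the outer induction: $B_+(t_1, \ldots, t_{n-1})$ has $N - |V(t_n)| < N$ vertices and $t_n$ has strictly fewer than $N$ vertices, so both are already in $\mathcal{G}$, and applying $N_{t_n}$ keeps us in $\mathcal{G}$. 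Each term in the corrective sum is a $\Q$-linear combination of trees with exactly $N$ vertices and only $n-1$ children at the root, so each such tree lies in $\mathcal{G}$ by the inner induction, giving $t \in \mathcal{G}$.

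The hard part will be setting up the induction correctly so that it is well-founded: a naive single induction on vertex count alone fails because the corrective sum on the right has the same number of vertices as $t$. The observation that makes the plan work is that this sum only contains trees with strictly fewer root children, so trading a child at the root for a strict vertex-count reduction inside the $N_{t_n}$ factor yields a genuine decrease in lexicographic order. A minor bookkeeping matter is that $N_{t_n}(t_i)$ is in general a linear combination of several trees rather than a single tree, but since $N_{t_n}$ and $B_+$ extend linearly, the induction applies summand-by-summand.
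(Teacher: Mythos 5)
Your proof is correct, and it rests on the same engine as the paper's: the identity \eqref{NBrel} rearranged to express $t = B_+(t_1,\ldots,t_n)$ as $N_{t_n}(B_+(t_1,\ldots,t_{n-1}))$ minus corrective terms of smaller root fertility, with \eqref{NBrel2} handling fertility one. Where you diverge is the induction scheme. The paper uses a \emph{single} induction on root fertility alone: its hypothesis quantifies over all trees of fertility $m-1$ of arbitrary vertex count, and this is already well-founded because the generating set allows $N_s$ for an \emph{arbitrary} rooted tree $s$ --- in particular the base case $B_+(t_1) = N_{t_1}(\bullet)$ requires no hypothesis on $t_1$, and the corrective trees, though they have the same vertex count as $t$, have fertility $m-1$ and so are covered. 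Your lexicographic induction on (vertex count, root fertility) is also valid, but the vertex-count coordinate only does real work because you additionally insist that the growth trees $t_n$ (and $t_1$) lie in $\mathcal{G}$ before you apply $N_{t_n}$; the theorem as stated does not demand this, so that step of your argument is superfluous for the claim at hand. The trade-off: the paper's route is leaner, while yours in fact proves a marginally stronger statement --- that $\h_{rt}$ is generated even when one restricts to growth operators $N_s$ whose subscript $s$ has itself already been built from $\bullet$ by natural growth. Your observation that a single induction on vertex count fails is right; the resolution the paper adopts is the other single induction (on fertility), not a double one.
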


\begin{proof}
It is sufficient to show that any tree $t \in \h_{rt}$ can be written as a
finite linear combination of elements in the set \bas
\{N_{t_n}(\ldots(N_{t_1}(\bullet))\ldots )| n \in \N, t_i \textrm{
  rooted tree}\} \;.\eas

We proceed by induction on the fertility of the root vertex of $t$.
Any rooted tree $t$ with root fertility one can be written $t =
B_+(t_1)$,  \bas t = B_+(t_1) = N_{t_1}(\bullet) \;. \eas By induction,
suppose all trees with root fertility $m-1$, that is of the form
$B_+(t_1 \ldots t_{m-1})$, can be written as a linear combination of
trees formed by repeated application of general natural growth operators. By \eqref{NBrel}, the tree $t = B_+( t_1 \ldots  t_m)$ can be written \bas t =
N_{t_m}(B_+(t_1 \ldots  t_{m-1})) - \sum_{i=1}^{m-1} B_+ (t_1 \ldots
N_{t_m}(t_i) \ldots  t_{m-1}) \;. \eas

Since $B_+(t_1\ldots t_{m-1})$ and $B_+ (t_1 \ldots
N_{t_m}(t_i) \ldots t_{m-1})$ both have root fertility $m-1$, then
$t$ can be written as a linear combination of multiple natural growth
operators on the single vertex tree.
\end{proof}

\subsection{Sub Hopf algebras generated by natural growth}

One can generate sub Hopf algebras of $\h_{rt}$ by careful selection
of a family of trees by which to grow.

\begin{dfn}
Let $S$ be a set of rooted trees. This defines a set of natural growth
operators \bas \mcN(S) = \{N_t | t \in S \} \;.\eas Define $\A_S$ as
the sub algebra of $\h_{rt}$ generated by the trees in $S$ and the
repeated application of operators in $\mcN(S)$. That is \bas \A_S =
\Q[S][\{N_{t_n}(\ldots(N_{t_1}(s))\ldots )| n \in \N, t_i, s \in S\}]
\;.\eas
\end{dfn}

By construction, if $u \in \A_S$ and $t \in S$, then $N_t(u) \in \A_S$.

\begin{thm}
If $\Q[S] \in \h_{rt}$ is a Hopf algebra under $\Delta$, then so is $\A_S$.
\end{thm}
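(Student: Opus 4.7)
The plan is to show that $\A_S$ is a graded sub-bialgebra of $\h_{rt}$; since $\h_{rt}$ is connected and graded (by number of vertices) and this grading restricts to $\A_S$, the antipode is then inherited for free and $\A_S$ is automatically a Hopf algebra. So the real content of the theorem is the closure of $\A_S$ under the coproduct.

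First I would observe that $\A_S$ is by construction a graded subalgebra of $\h_{rt}$: each $N_t$ shifts grading by $Y(t)=|V(t)|$, and $\Q[S]$ is already a graded subalgebra. Since $\Delta$ is an algebra map, showing $\Delta(\A_S)\subseteq \A_S\otimes \A_S$ reduces to checking this on a set of algebra generators. I would take as generators the elements $N_{t_n}(\cdots N_{t_1}(s)\cdots)$ with $n\geq 0$ and $t_1,\ldots,t_n,s\in S$, and induct on $n$. The base case $n=0$ gives an element $s\in S \subseteq \Q[S]$, and $\Delta(s)\in \Q[S]\otimes\Q[S]\subseteq \A_S\otimes \A_S$ by hypothesis.

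For the inductive step I consider $N_t(u)$ with $t\in S$ and $u$ of depth $n-1$, so that $\Delta(u)\in \A_S\otimes \A_S$ by induction. Applying Theorem \ref{Ntcoprod},
$$\Delta(N_t u) \;=\; (N_t\otimes \id)\Delta u \;+\; \sum_{c \text{ admis. cut of } t} (P_c(t)\otimes N_{R_c(t)})\,\Delta u\;.$$
The first term lies in $\A_S\otimes \A_S$ since $N_t$ preserves $\A_S$ for $t\in S$. For the sum, the hypothesis that $\Q[S]$ is a sub-Hopf algebra gives $\Delta(t)\in \Q[S]\otimes \Q[S]$; because the Connes--Kreimer coproduct expands in the (forest $\otimes$ forest) basis with non-negative integer coefficients, there is no cancellation, so each individual pair $(P_c(t),R_c(t))$ already lies in $\Q[S]\times S$. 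Hence $P_c(t)\in \Q[S]\subseteq \A_S$ and $R_c(t)\in S$, so $N_{R_c(t)}$ preserves $\A_S$, and the summand lies in $\A_S\otimes \A_S$. Multiplicativity of $\Delta$ then extends closure from generators to all of $\A_S$.

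Finally, $\A_S$ is connected graded (each generator $N_{t_n}(\cdots N_{t_1}(s)\cdots)$ is homogeneous of strictly positive degree, so $\A_S^0 = \Q$), so the antipode of $\h_{rt}$ given by $S(x) = -x - \sum P_c(x) S(R_c(x))$ restricts recursively to $\A_S$, completing the Hopf algebra structure. The main subtle point I expect is the no-cancellation step: one really does need that $\Delta(t)\in \Q[S]\otimes\Q[S]$ implies each summand lies termwise in $\Q[S]\otimes \Q[S]$ (and in particular that each $R_c(t)\in S$, so that $N_{R_c(t)}$ is an allowed natural growth operator), and this is precisely where the positivity of the basis expansion of $\Delta(t)$ is essential.
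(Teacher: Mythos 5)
Your proof is correct and follows essentially the same route as the paper's: reduce to closure of $\Delta$ on the generators $N_{t_n}(\cdots N_{t_1}(s)\cdots)$ using Theorem \ref{Ntcoprod} and induction on the number of growth operators, then invoke the connected grading to inherit the antipode. Your one refinement is making explicit the no-cancellation (positivity) argument that places each pair $P_c(t)\otimes R_c(t)$ termwise in $\Q[S]\otimes\Q[S]$, a point the paper uses implicitly when it asserts $P_{c_s}(s)\in\Q[S]$ and $R_{c_t}(t)\in S$ (modulo the trivial full-cut case $R_c(t)=1$, where $N_1=Y$ preserves the graded algebra $\A_S$).
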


\begin{proof}
It is sufficient to show that \bas \Delta: \Q[S] \rightarrow \Q[S]
\otimes \Q[S] \quad \Rightarrow \quad \Delta: \A_S \rightarrow \A_S
\otimes \A_S \; . \eas Since, $\Q[S]$ and $\A_S$ are graded and
connected, this implies that they are Hopf algebras.

Consider $s,t \in S$, $N_t(s) \in \A_S$. By Theorem \ref{Ntcoprod},
\ba \Delta N_t(s) = (N_t\otimes \id)\Delta s +
\sum_{\begin{subarray}{c}c_t \textrm{ admis. cut} \\ \textrm{of }
    t\end{subarray}} (P_{c_t}(t) \otimes N_{R_{c_t}(t)}) \Delta s
\;. \label{recall}\ea If $\Q[S]$ is a Hopf algebra, $\Delta(s) \in
\Q[S] \otimes \Q[S]$. The first term in \eqref{Ntcoprod} is \bas
(N_t\otimes \id)\Delta s = \sum_{\begin{subarray}{c}c_s \textrm{
      admis. cut} \\ \textrm{of } s\end{subarray}} N_t(P_{c_s}(s))
\otimes R_{c_s}(s) \in \A_S \otimes \A_S. \eas The term
$N_t(P_{c_s}(s)) \in \A_S$ since $P_{c_s}(s) \in \Q[S]$.  Similarly,
the second term in equation \eqref{recall} can be written \bas
\sum_{\begin{subarray}{c}c_s \textrm{ admis. cut} \\ \textrm{of }
    s\end{subarray}} \sum_{\begin{subarray}{c}c_t \textrm{ admis. cut}
    \\ \textrm{of } t\end{subarray}} P_{c_t}(t) P_{c_s}(s) \otimes
N_{R_{c_t}(t)}(R_{c_s}(s)) \in \A_S \otimes \A_S\;. \eas Since the trees $t,
s \in S$,  the Hopf algebra structure of $\Q[S]$ implies that
$R_{c_s}(s)$ and $R_{c_t}(t)$ are as well. Therefore
$N_{R_{c_t}(t)}(R_{c_s}(s)) \in \A_S$.

For a set of trees $\{s,t_1 \ldots t_n\}$ write $u_{s,t_1\ldots t_n}
\in \A_S$, defined \bas u_{s,t_1\ldots t_n} = (
N_{t_n}(\ldots(N_{t_1}(s))\ldots )) \;.\eas To calculate the coproduct
of $u$ \bas \Delta (u_{s,t_1\ldots t_n}) = (N_{t_n}\otimes \id)
\Delta u_{s,t_1\ldots t_{n-1}} + \sum_{\begin{subarray}{c}c\textrm{
      admis. cut} \\ \textrm{of } t_n\end{subarray}} (P_c(t) \otimes
N_{R_c(t)}) \Delta (u_{s,t_1\ldots t_{n-1}}) \;. \eas By induction,
suppose that for all elements of the form $u_{s, t_1, \ldots t_k}$ for
$k <n$, and $s, t_i \in S$, \bas \Delta u_{s, t_1, \ldots t_k} \in
\A_S \otimes \A_S\;.\eas Then by similar arguments as above, \bas
\Delta (u_{s,t_1\ldots t_n}) \in \A_S \otimes \A_S\;. \eas Since the
$N_t$ are algebra homomorphisms, this extends to all $u \in \A_S$.
\end{proof}

\begin{eg}
Let $S = \{\bullet\}$. The polynomial algebra $\Q[S]$ is a Hopf algebra
  since \bas \Delta \bullet = 1 \otimes \bullet + \bullet \otimes 1
  \;. \eas The Hopf algebra $\A_\bullet$ is  $\h_{CK}$. \label{hCK}
\end{eg}

The rest of this section is devoted to computing the sub Hopf algebras
defined by corollas.

\begin{dfn}
Let $C_i$ be the tree with $i$ vertices, $i-1$ of which have no daughters. These are
called corollas. \end{dfn}

For instance, \bas C_1 = \xy \POS (0,0) *+{\bullet} *\cir{} \endxy
\quad ; \quad C_2 = \xy \POS(0,4) *+{\bullet} *\cir{} \POS(0,4)
\ar@{-} + (0, -8) \POS(0, -4) *+{\bullet} \endxy \quad ; \quad F_3 =
\xy \POS(0,4) *+{\bullet} *\cir{} \POS(0,4) \ar@{-} + (-4, -8)
\POS(0,4) \ar@{-} + (+4, -8) \POS(-4, -4) *+{\bullet} \POS(4, -4)
*+{\bullet} \endxy \quad ;\textrm{ and } \quad C_i = \underbrace{\xy
  \POS(0,4) *+{\bullet} *\cir{} \POS(0,4) \ar@{-} + (-6, -8) \POS(0,4)
  \ar@{-} + (+6, -8) \POS(-6, -4) *+{\bullet} \POS(6, -4) *+{\bullet}
  \POS(0, -4) *+{\ldots} \endxy}_{i-1 \text{ times}} \;. \eas The
function associated to each corolla is \bas \phi^j(C_i) =
\phi(\bullet)^{k_1}\cdots \phi(\bullet)^{k_{i-1}} (\D_{k_1\ldots
  k_{i-1}}\phi(\bullet)^j) \;. \eas

Let $S_k = \{C_i | i \leq k\}$. Using this notation, $\h_{CK} =
\A_{S_1}$.

\begin{thm}
The algebras $\A_{S_k}$ are  Hopf algebras.
\end{thm}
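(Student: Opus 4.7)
The plan is to invoke the immediately preceding theorem: since $\A_S$ is a Hopf algebra whenever $\Q[S]$ is closed under $\Delta$, it is enough to verify that $\Q[S_k]$ is itself a sub bialgebra of $\h_{rt}$. Grading and connectedness are inherited from $\h_{rt}$, so only coclosure of the coproduct needs to be checked.

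Because $\Delta$ is an algebra homomorphism and $\Q[S_k]$ is the polynomial algebra on $F_1,\ldots,F_k$, this reduces to computing $\Delta(F_i)$ for each $i \leq k$ and checking that the result lies in $\Q[S_k] \otimes \Q[S_k]$. Here the geometry of the fan is very friendly: every root-to-leaf path in $F_i$ consists of a single edge, so the $i-1$ leaf edges are pairwise separated and every subset of them is automatically an admissible cut. A cut consisting of $j$ leaf edges produces a pruned forest of $j$ isolated vertices (that is, $F_1^j$) together with a root subtree equal to $F_{i-j}$. Counting the $\binom{i-1}{j}$ such cuts and including the two trivial cuts gives
\bas \Delta(F_i) = F_i \otimes 1 + 1 \otimes F_i + \sum_{j=1}^{i-1} \binom{i-1}{j} F_1^j \otimes F_{i-j} \;. \eas
Every factor on the right lies in $S_k$: $F_1 \in S_k$, and for $1 \leq j \leq i-1$ the fan $F_{i-j}$ has weight strictly less than $i \leq k$.

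Putting the pieces together, the coproduct closes on the generators $F_i$ of $\Q[S_k]$, and because $\Delta$ respects products it closes on all of $\Q[S_k]$, making it a sub Hopf algebra of $\h_{rt}$. The previous theorem then delivers that $\A_{S_k}$ is a Hopf algebra. There is no real obstacle: the computation works precisely because fans are ``shallow'' trees whose admissible cuts decompose them into smaller fans and copies of $F_1$, and no fan of weight larger than $k$ can appear on either side of the coproduct of a generator.
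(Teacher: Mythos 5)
Your proof is correct and takes essentially the same approach as the paper: verify that $\Q[S_k]$ is closed under $\Delta$ by computing the coproduct of each fan explicitly (every subset of leaf edges is an admissible cut, yielding copies of $F_1$ and a smaller fan), then invoke the preceding theorem. In fact your formula $\Delta(F_i) = F_i \otimes 1 + 1 \otimes F_i + \sum_{j=1}^{i-1} \binom{i-1}{j} F_1^{\,j} \otimes F_{i-j}$ silently corrects an off-by-one index in the paper's displayed computation, which writes $F_{n-i-1}$ where grading (the coproduct preserves vertex count) forces $F_{n-i}$.
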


\begin{proof}
We need only check that $\Q[S_k]$ is a Hopf algebra. To see this,
notice that for $n \leq k$, \bas \Delta C_n = 1 \otimes C_n + C_n
\otimes 1 + \sum_{i= 1}^{n-1} {n-1\choose i}\bullet ^i \otimes
C_{n-i-1} \in \Q[S_k] \otimes \Q[S_k]\;. \eas
\end{proof}

\section{The Connes Moscovici Hopf algebras and their generalizations \label{CMpict}}

In \cite{CMos98}, the authors define a Hopf algebra $\h(n)$ on vector
fields over $F^+M$, for $M$ $n$-dimensional manifolds. This section
recalls their construction, and generalizes it. For a clear
explanation of Connes and Moscovici's construction of $\h(n)$, see
\cite{CMexplain} and \cite{RangipourMoscovici09}.

For a $n$-dimensional smooth manifold, $M$, let the coordinates \bas x^\mu : U \subset M
\rightarrow  \R^n\eas parametrize
a coordinate patch $U \subset M$. For $s \in \R^n$, write $y_i^\mu =
\frac{\D}{\D s_i} x^\mu$. Then $x^\mu$, $y^\mu_i$ form a set of local coordinates of
$TU$. In this paper, we consider only the case of $1$ dimensional manifolds, though exploring the combinatorial properties of higher dimensions poses an interesting question for future work.

Connes and Moscovici's are only interested in orientation preserving
diffeomorphisms on $M$, and thus only consider orientation preserving
frames $F^+(M)$. In the one dimensional case, this means they write
\ba y(s) = e^{z(s)} \label{expparam} \ea for some $z : \R \rightarrow M$. Their analysis involves the study of the curvature of $M$ and connections on $TF^+(U)$. However, curvature is a meaningless concept for one dimensional manifold. Instead of studying the standard Christoffel symbol on $M$, define a function \ba \Gamma(x) =
-\frac{1}{(\D_s x)^2} \D_s^2x \label{Gammadef}\ea to take its role in the one dimensional case. If $\psi$ is an orientation preserving local diffeomorphism on $M$ with $\textrm{Dom}(\psi) \subset U$,
then \bas \Gamma(\psi) = -\frac{1}{(\D_s \psi)^2} \D_s^2\psi \;. \eas The $\Gamma$ function transforms under coordinate change as Christoffel symbols do. Namely, \ba \Gamma(x)|_x =
\D_x\psi|_x\Gamma(\psi)|_{\psi(x)} + \frac{1}{\D_x\psi}
\D_x^2\psi|_x \label{Gamma} \;. \ea The authors then
define a connection $\mathfrak{g}(1)$ valued one form on
$F^+(U)$, \bas \omega = (y^{-1})(\D_y +
\Gamma y \D_x) \;,\eas  and a vector field over $F^+(U)$ \ba Y = y
dy \label{onedimY} \ea which generates the $GL^+(1, \R)$ action on $
F^+(U)$. The connection
defines a horizontal vector field over $F^+(U)$ \bas X =
y(\D_x - \Gamma y \D_y)
\;.\eas For $g \in C^\infty_c(U)$, using \eqref{Gammadef}, the action of the vector field $X$ simplifies as \ba X (g) = \ds g \label{onedimX}
\;.\ea

\begin{rem}
Compare the vector fields \eqref{onedimY} and \eqref{onedimX} to the vector fields defined in \eqref{phinatgrowth}. For $t = \bullet$, the vector field $X = \phi_{N_\bullet}$. For $t = 1$, $Y = \phi_{N_1} = \phi_Y$. It is this observation that motivates the analysis in the rest of this paper.
\end{rem}

The vector fields $X$
and $Y$ act on the algebra generated by compactly supported functions
on $M$ crossed with the pseudo group of orientation preserving
diffeomorphisms of $M$.

\begin{dfn}
Let $\textrm{Diff}^+(M)$ be the pseudo group of orientation preserving
local diffeomorphisms on $M$. Define a group \bas \A = C_c^\infty
(F^+(M))\rtimes \textrm{Diff}^+(M) \eas that is the semi-direct product
of compactly supported smooth functions on $F^+(M)$ with orientation
preserving diffeomorphisms of $M$. \end{dfn}

In order to explore the action of $X$ and $Y$ on this algebra, and to generalize them, we first review the construction for $\A$ and some Riemannian geometry, following \cite{CK98} and \cite{CMexplain}. While much of what follows can be generalized to a multi-dimensional manifold, $M$, we continue our exposition in the one dimensional case, as that is all we study in detail.

Any $\psi \in \textrm{Diff}^+(M)$ lifts to a diffeomorphism of the
frame bundle $\tilde\psi \in \textrm{Diff}^+(F^+M)$ \bas \tilde \psi :
\{x, y\} \rightarrow \{\tilde x := \psi(x), \tilde
y: = y \D_x \psi \} \;. \eas The group $\A$ is
generated by the monomials \bas f U_\psi^* \in \A \quad f \in
C_c^\infty(\textrm{Dom}\tilde \psi) \quad \psi \in
\textrm{Diff}^+(U)\;.\eas The $*$ in the monomial corresponds to the
contravariant multiplication \bas U^*_{\psi_1}U^*_{\psi_2} =
U^*_{\psi_2 \circ \psi_1} \;.\eas Multiplication in the group is given by
\bas (f_1 U_{\psi_1}^*)(f_2 U_{\psi_2}^*) = f_1 U_{\psi_1}^*
f_2U_{\psi_1}^{*-1} U_{\psi_1}^* U_{\psi_2}^* = f_1 \cdot (f_2 \circ
\tilde\psi_1) U_{\psi_1\psi_2}^* \eas where $\cdot$ corresponds to
point-wise multiplication. Writing $U_\psi^{* -1} = U_{\psi^-1}^*$,
commutation with $U_\psi^*$ \ba U_\psi^* f U_{\psi^-1}^* = f(p)
\circ \tilde \psi(p) \; \label{commutator} \ea results in changing
the point of evaluation from $p \in F^+(U)$ to $\tilde \psi(p)$. The domain
of $f_1 \cdot (f_2 \circ \tilde\psi_2)$ is \bas \textrm{Dom}(\tilde
\psi_1) \cap \tilde \psi_1^{-1}(\textrm{Dom} \tilde \psi_2) \;. \eas

The actions of $X_i$ and $Y$ on the monomials $fU_\psi^*$ are given by
\bas X_i(fU_\psi^*) = (X_if)U_\psi^* \quad \quad Y(fU_\psi^*) =
(Yf)U_\psi^* \;.\eas Before recalling their actions on products, we
recall the pushforwards of these vector fields. For a point $p \in
F^+(U)$, and $\tilde p = \tilde \psi (p)$. Then \ba \tilde \psi_*
Y|_{\tilde p} f = \tilde y \frac{\D f(\tilde p) }{\D \tilde
  y} = y|_{\tilde p} \frac{\D f(p)}{\D y}|_{\tilde
  p} = Y|_{\tilde p} f \label{Ypullback}\;.\ea That is, the vector
field $Y$ is invariant under orientation preserving
diffeomorphisms. On the other hand, \bas \tilde \psi_* X|_{\tilde p} f =
\frac{\D \;}{\D s}(f\circ \tilde \psi(p)) \neq \frac{\D\;}{\D
  s_i}|_{\tilde p}(f) \;. \eas Let $\tilde \Gamma$ be the
Christoffel symbol under the change of coordinates on $M$ induced by the orientation preserving diffeomorphism $\psi$. Then \ba \tilde \psi_* X|_{\tilde p} = \tilde
y(\frac{\D \;}{\D \tilde x} - \tilde \Gamma \tilde y \D \tilde y) \;.\label{Xpush}\ea

Any vector field $V$ acting on a product of monomials
$(f_1U_{\psi_1}^*)(f_2U_{\psi_1}^*)$, gives \cite{CMexplain} \bml
V|_p(f_1U_{\psi_1}^*)(f_2U_{\psi_1}^*) = V|_p(f_1)\cdot (f_2 \circ
\tilde \psi_1(p)) U_{\psi_2\psi_1}^* + f_1(p)\cdot \tilde
\psi_{1*}V|_{\tilde \psi_1(p)}f_2 U_{\psi_2\psi_1}^* =
\\ (V(f_1) U_{\psi_1}^*) (f_2 U_{\psi_2}^*) + (f_1 U_{\psi_1}^*)\cdot
U_{\psi_1^{-1}}^* \tilde \psi_{1*}V|_{\tilde \psi_1(p)}f_2
U_{\psi_1}^*U_{\psi_2}^* = \\ (V(f_1) U_{\psi_1}^*) (f_2 U_{\psi_2}^*) +
(f_1 U_{\psi_1}^*) ( \tilde \psi_{1*}V|_pf_2 U_{\psi_2}^*)
\;. \label{genvfield}\eml By \eqref{Ypullback}, since $\tilde
\psi_{1*}Y|_p = Y|_p $, \ba Y|_p(f_1 U_{\psi_1}^*)( f_2
U_{\psi_2}^*) = (Y|_p (f_1U_{\psi_1}^*)) f_2U_{\psi_2}^* + (f_1
U_{\psi_1}^*)(Y|_p(f_2U_{\psi_2}^*)) \;.\label{Yprod} \ea Applying
\eqref{genvfield} to $X_i$ gives \bas
X|_p(f_1U_{\psi_1}^*)(f_2U_{\psi_1}^*) = (X|_p(f_1)
U_{\psi_1}^*)(f_2U_{\psi_2}^*) + (f_1U_{\psi_1}^*)((\tilde
\psi_{1*}X)|_pf_2 U_{\psi_2}^*) \;. \eas Rewrite this \ba
(X(f_1)U_{\psi_1}^*)(f_2U_{\psi_2}^*) + (f_1U_{\psi_1}^*)
(X(f_2)U_{\psi_2}^*) + (f_1U_{\psi_1}^*)((\tilde \psi_{1*}X -
X)(f_2)U_{\psi_2}^*) \;.\label{Xprod} \ea By \eqref{Xpush}, \bas
(\tilde \psi_{1*}X - X)|_p = \Gamma|_p
(y Y)|_p - \tilde \Gamma|_{p} (\tilde y\tilde Y)|_p  \;. \eas

\begin{dfn}
Write $\gamma|_p(\psi_1) Y = (\tilde \psi_{1*}X - X)|_p$. That is, \bas \gamma|_p(\psi) = y \Gamma |_p - \tilde y \tilde Y|p \;.\eas
Define a linear operator on $\A$, $\delta_1$, such that \bas
\delta_1(f_1 U_{\psi_1}^*) = (\gamma|_p(\psi_1)f_1
U_{\psi_1}^*) \;.\eas
\label{delta1def}\end{dfn}

For $a, b \in \A$, equation \eqref{Xprod} gives \bas X(ab) = X(a)
b + aX(b) + \delta_1(a) Y(b) \;.\eas Connes and Moscovici
show that \bas \delta(a b) = \delta(a)
\delta(b)\;.\eas Writing $\psi' = \D_x
\psi$, and using equation \eqref{Gamma}, at the point $q = \tilde \psi
(p)$, \ba \gamma(\psi)|_q = y \Gamma(x)|_q - y \Gamma(x)|_p +
\frac{1}{\psi'}\frac{d \psi'}{ds}|_p \;. \label{gammadef} \ea

\begin{dfn}
Let $\h_{CK}$ be the algebra defined \bas \h_{CK} =
\Q[\{\delta_i| i \in \N\}] \;.\eas Let $\mathfrak{g}_1$ be the Lie algebra generated by $X$ and $Y$.
\end{dfn}

In $\mathfrak{g}_1$, there is the
  commutation relation \bas [Y, X] = X \;. \eas There is a right
  coaction in the bicrossed product ${\h_{CK}} \acl
  \mathcal{U}(\mathfrak{g}_1)$ given by \bas Y \rightarrow Y \otimes
  1 \quad ; \quad X \rightarrow X \otimes 1 + \delta_1 \otimes Y \;,
  \eas and a left coaction \bas X \delta_n = \delta_{n+1}, \quad ;
  \quad Y \delta_n = n \delta_{n} \;.\eas This algebra $\h_{CK}$ is actually a Hopf algebra, as shown in \cite{CMos98}.
The bicrossed product of these two Hopf algebras, gives the Hopf algebra of interest \bas \h(1) = \h_{CK} \acl {\mathcal{U}(\mathfrak{g}_1)}\;.\eas
This has all been done in the one dimensional case. See \cite{RangipourMoscovici09} for a generalization of this construction to the $n$ dimensional case.

%\begin{rem}To simplify matters, Connes and Moscovici restrict to the
%domain where $\Gamma^i_{jk} = 0$, which ensures that \bas [X_i, X_j] =
%0 \;. \eas If $M$ is a one dimensional manifold, this is trivially
%true. We do not make such a requirement in the one dimensional
%case. \end{rem}

In \cite{CK98}, the authors show that
$\h_{CK}$ is isomorphic to the sub Hopf algebra of rooted trees
formed by applying the natural growth operator, $N_\bullet$, to the
tree $\bullet$ from example \ref{hCK}. The action of the vector field
$X$ on $\h_{CK}$ corresponds to the natural growth by a single
vertex. The action of the vector field $Y$ on $\h_{CK}$ corresponds to
the grading function, or, by \eqref{natgrowbyone} natural growth by $1$.

\subsection{The new Hopf algebra $\h_{rt}(1)$}

In this section, we enlarge $\h(1)$ to a Hopf algebra \ba \h_{rt}(1)
:= {\h_{rt}} \acl {\mathcal{U}(\mathfrak{g}_{rt})}\; , \label{Hrtdef}\ea where
  $\mathfrak{g}_{rt}$ is a Lie algebra generated by vector fields of
  the form $X_t$, for $t$ a rooted tree. We spend this section
  defining this Lie algebra, and showing that the left action of $X_t$
  on linear operators $\delta_{t'}$, for some $t' \in \h_{rt}$
  corresponds to natural growth of $t'$ by $t$.

Since the vector field $X$ in $\h(1)$ acts on $\delta_i$ as natural
growth by $\bullet$, and $Y(\delta_1) = \delta_1$, we rename these
maps to indicate the trees they correspond to in
$\h_{rt}$. Specifically, define \bas X_\bullet := X \quad ; \quad
\delta_\bullet:=\delta_1 \;. \eas By this notation, define the
function $\gamma(\psi) = \gamma_\bullet(\psi)$. Using the coordinates
introduced in \eqref{expparam} and equation \eqref{onedimX} we
rewrite $X_\bullet$ in terms of the independent coordinates $x,
z$. Specifically, $z = \log \frac{dx}{ds}$, $Y = \D_z$, and \bas
X_\bullet = e^z \D_x - e^z\Gamma(x) \D_z \;.  \eas Notice that \ba
\frac{dx}{ds} = e^z \quad ; \quad \frac{dz}{ds} =- e^z\Gamma(x)
\;.\label{FMODE}\ea For the remainder of this paper, this defines the
system of differential equations of interest. We apply the notation
developed in section \ref{NB} to the equations in \eqref{FMODE},
with the index $i \in \{x, z\}$. In this context, rewrite \bas
X_\bullet = \phi(\bullet)^i \D_i \;. \eas The pushforward of
$X_\bullet$ under an orientation preserving diffeomorphism $\tilde
\psi \in \textrm{Diff}^+(F^+M)$ gives the expression \bas \tilde \psi_*
X_\bullet = \phi^x(\bullet) \D_x + (\phi^z(\bullet) + \ds \; \log
\psi') \D_z \;. \eas Similarly, for $q = \tilde\psi (p)$, $\gamma_\bullet (\psi)$ becomes \bas
\gamma_\bullet (\psi)|_q = \phi^z(\bullet)|_p +
(\ds \log \psi')|_p - \phi^z(\bullet)|_q
\;. \eas

\begin{dfn}
Define a function \bas \gamma_t(\psi) = \phi_t(\gamma_\bullet(\psi))
\;, \eas according to Definition \ref{phimapgen}. This defines a linear
operator $\delta_t$ on $\A$ \bas \delta_t (f U_\psi) = \gamma_t(\psi)
f U_\psi \;. \eas
\label{deltatdef}
\end{dfn}

For a forest $t t'$ of rooted trees, we write \bas \delta_{tt'} :=
\delta_t \delta_{t'} \;.\eas

\begin{lem}
Consider the operator \bas \phi_t: C_c^\infty (F^+M) \rightarrow
C_c^\infty (F^+M) \;. \eas Write $t = B_+(t_1 \ldots t_n)$. The
pushforward of $\phi_t$ by $\tilde \psi \in \textrm{Diff}^+(F^+M)$ is
given by \bas \tilde \psi_* \phi_t = \prod_{j = 1}^n \phi_{t_j}
(\tilde \psi_*\phi^{i_j} (\bullet)) \tilde \psi_*(\prod_{j = 1}^n \D_{i_j}) \;.\eas
\label{pushforward} \end{lem}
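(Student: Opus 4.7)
I start from the $B_+$-factored description of $\phi_t$ given by equation \eqref{Bphinotation}, which writes
\[
\phi_t \;=\; \Big(\prod_{j=1}^n \phi_{t_j}(\phi^{i_j}(\bullet))\Big)\,\prod_{j=1}^n \partial_{i_j}.
\]
This exhibits $\phi_t$ as multiplication by the scalar coefficient $\prod_j \phi_{t_j}(\phi^{i_j}(\bullet))$ followed by the pure differential operator $\prod_j \partial_{i_j}$. Pushing forward by $\tilde\psi$ then reduces to pushing forward each piece separately, because $\tilde\psi_*$ — defined as conjugation by $\tilde\psi$ on operators acting on $C^\infty(F^+M)$ — is multiplicative under composition of operators and under pointwise products of scalar functions.

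Applying these two properties immediately gives
\[
\tilde\psi_*\phi_t \;=\; \Big(\prod_{j=1}^n \tilde\psi_*\!\bigl(\phi_{t_j}(\phi^{i_j}(\bullet))\bigr)\Big)\,\tilde\psi_*\Big(\prod_{j=1}^n \partial_{i_j}\Big),
\]
so the differential-operator factor is already in the desired form. What remains is to identify each coefficient $\tilde\psi_*\bigl(\phi_{t_j}(\phi^{i_j}(\bullet))\bigr)$ with $\phi_{t_j}\bigl(\tilde\psi_*\phi^{i_j}(\bullet)\bigr)$ — that is, to show that the universal recipe defining $\phi_{t_j}$ (a polynomial expression in $\phi(\bullet)$ and its partial derivatives) is natural under the diffeomorphism pushforward of its argument.

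I would prove this naturality by induction on $|V(t_j)|$ using the $B_+$-recursion and definition \eqref{phimapgen}. The base case $t_j = \bullet$ is trivial because $\phi_\bullet$ is the identity. For the inductive step, writing $t_j = B_+(s_1,\dots,s_m)$ and expanding by \eqref{Bphinotation} reduces the claim for $t_j$ to the same claim for each $s_l$, combined with the Leibniz and chain-rule behaviour of $\tilde\psi_*$ on products and partial derivatives of scalar functions, in the spirit of the transformation laws \eqref{Ypullback} and \eqref{Xpush} already recorded for $X_i$ and $Y^i_j$.

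The main obstacle is this final naturality step: one must carefully match the pushforward of a nested expression built from derivatives of the vector field $\phi(\bullet)$ with the corresponding expression formed instead from $\tilde\psi_*\phi(\bullet)$ and $\tilde\psi_*\phi^{i_j}(\bullet)$. Once this compatibility is secured, the rest is bookkeeping along the $B_+$-decomposition, and the identity of the lemma follows by combining the two displayed equations above.
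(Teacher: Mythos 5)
Your starting point is the paper's own, namely the factorization \eqref{Bphinotation}, but the argument breaks at the hinge between your multiplicativity step and your naturality step, where you use two incompatible meanings of $\tilde\psi_*$. For the multiplicativity of conjugation to hold, $\tilde\psi_*$ applied to the scalar coefficient block must be plain precomposition: conjugating a multiplication operator $h \mapsto g\cdot h$ yields multiplication by $g\circ\tilde\psi^{-1}$, with no derivatives of $\tilde\psi$ appearing. But in the statement of the lemma, $\tilde\psi_*\phi^{i_j}(\bullet)$ is the \emph{vector-field} pushforward, $\tilde\psi_*\phi^i(\bullet) = \ds\tilde\psi^i = \phi^j(\bullet)\,\D_j\tilde\psi^i$, which does carry the Jacobian of $\tilde\psi$. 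Consequently the naturality identity you propose, $\tilde\psi_*\bigl(\phi_{t_j}(\phi^{i_j}(\bullet))\bigr) = \phi_{t_j}\bigl(\tilde\psi_*\phi^{i_j}(\bullet)\bigr)$, is false under the reading forced by your first step --- and it fails already at your ``trivial'' base case $t_j = \bullet$, where the left side is $\phi^i(\bullet)\circ\tilde\psi^{-1}$ while the right side is Jacobian-twisted. For $t_j = B_+(\bullet)$ the discrepancy is starker: \bas \phi_{t_j}\bigl(\tilde\psi_*\phi^i(\bullet)\bigr) = \phi^j(\bullet)\,\D_j\phi^k(\bullet)\,\D_k\tilde\psi^i + \phi^j(\bullet)\,\phi^k(\bullet)\,\D_{jk}\tilde\psi^i \;,\eas and the second-derivative term $\D_{jk}\tilde\psi^i$ cannot arise from any precomposed expression. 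No induction can close this gap, because the recipe $\phi_t$ is genuinely \emph{not} equivariant under $\tilde\psi_*$: that failure of equivariance is precisely what the functions $\gamma_t(\psi)$ of Definition \ref{deltatdef} and the admissible-cut expansion of Lemma \ref{coprodcontrib} are built to record, and your claimed naturality would collapse that structure.

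The correct mechanism, and the paper's actual argument, couples the two blocks you tried to separate. Writing $\tilde\psi_*\phi_t f = \phi_t(f\circ\tilde\psi)$ and expanding $\prod_j \D_{i_j}(f\circ\tilde\psi)$ by the chain rule produces, besides the top term involving $\prod_j \frac{\D}{\D\tilde\psi^{i_j}} f$, lower-order terms containing higher derivatives of $\tilde\psi$; these must migrate out of the derivative block and be reabsorbed into the coefficients. The paper organizes this bookkeeping through the corrected naturality statement \eqref{tildephi}, $\tilde\phi^i(t) = \phi_t(\tilde\psi_*\phi^i(\bullet))$, where $\tilde\phi$ is the Butcher map recomputed in the $\tilde\psi$-coordinates; this is proved from the base case $\tilde\psi_*\phi^i(\bullet) = \ds\tilde\psi^i$ together with the invariance of the flow field, $\tilde\phi^i(\bullet)\frac{\D}{\D\tilde\psi^i} = \phi^i(\bullet)\,\D_i$. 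Note that \eqref{tildephi} is a statement about rebuilding the coefficients from the pushed-forward generator in the new coordinates, not about pushing the old coefficient functions forward as scalars. If you replace your naturality claim with \eqref{tildephi} and carry out the chain-rule reorganization that transfers the $\D_{jk}\tilde\psi^i$-type terms from the derivative factor into the coefficients, your outline aligns with the paper's proof; as written, however, the central identity is wrong and the factorization into independently pushed blocks is exactly where the proof fails.
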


\begin{proof}
Let $\tilde \psi \in \textrm{Diff}^+(F^+M)$. Recall that $\tilde\psi =
(\psi, \log (\ds \psi) )$, for $\psi \in \textrm{Diff}^+(M)$. Write
$\tilde\phi$ as the map from $\h_{rt}$ to $C_c^\infty(F^+M)$ defined
by coordinates defined by $\tilde \psi$. Define $\tilde \phi(t)$ accordingly. For $t = \bullet$, this is
$\tilde \psi_*\phi^i(\bullet)$, \bas \tilde \psi_*\phi^i(\bullet) =
\ds \tilde\psi^i = \phi^j(\bullet) \D_j \tilde\psi^i = \tilde
\phi^i(\bullet) \;. \eas Since $\tilde \phi^i(\bullet) \frac{\D}{\D
  \tilde \psi^i} = \phi^i(\bullet) \D_i$, we get \bas \tilde \phi^i(t)
= \prod_{j = 1}^n \phi^{i_j}(t_j) \prod_{j = 1}^n\D_{i_j} \tilde
\psi_*\phi^i(\bullet) \eas for $t = B_+(t_1 \ldots t_n)$. In other
words \ba \tilde \phi^i(t) = \phi_t (\tilde
\psi_*\phi^i(\bullet))\;. \label{tildephi}\ea

The pushforward of $\phi_t$ is \bas \tilde \psi_* \phi_t f = \phi_t (f
\circ \tilde \psi) \;. \eas Applying this to equation \eqref{Bphinotation}
gives \bas \tilde \psi_* \phi_t = \prod_{j = 1}^n \phi^{i_j}(t_j)
\prod_{j = 1}^n \D_{i_j} (f \circ \tilde \psi) \eas which can be rewritten
using \eqref{tildephi} as \bas \prod_{j = 1}^n \tilde \phi(t_j)
\prod_{j = 1}^n \frac{\D}{\D \tilde \psi^{i_j}} f = \prod_{j = 1}^n
\phi_{t_j} (\tilde \psi_*\phi^i(\bullet)) (\prod_{j = 1}^n
\frac{\D}{\D \tilde \psi^{i_j}}) f \;.  \eas
\end{proof}

The pushforward of $\phi_t$ leads to a useful expression for
calculating its action for a product of monomials in $\A$.

\begin{lem}
The pushforward \bas \tilde \psi_* \phi_t = \sum_{\begin{subarray}{c}
    c \textrm{ admis.  cut}\\ c \neq \textrm{ full}
\end{subarray}} \gamma_{P_c(t)}(\psi) \phi_{R_c(t)}|_{\tilde \psi} \;.
\eas For $R_c(t) = 1$, we write $\phi_1 = Y$.\label{coprodcontrib}
\end{lem}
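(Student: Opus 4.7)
My plan is to prove the formula by induction on the number of vertices of $t$, using Lemma \ref{pushforward} as the main computational input and the recursive $B_+$-description of rooted trees to match the combinatorial structure of admissible cuts.

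For the base case $t = \bullet$, the only non-full admissible cut is the empty cut, yielding $P_c = 1$ and $R_c = \bullet$, so the claim reduces to $\tilde\psi_*\phi_\bullet = \gamma_1(\psi)\phi_\bullet|_{\tilde\psi}$ with the convention $\phi_1 = Y$. I would check this by direct computation in the exponential parametrization $y = e^z$, using the transformation $\tilde y = y\psi'(x)$ and the formulas for $\gamma_\bullet(\psi)$ collected before Definition \ref{deltatdef}.

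For the inductive step, write $t = B_+(t_1, \ldots, t_n)$ and apply Lemma \ref{pushforward} to obtain
\[
\tilde\psi_*\phi_t = \prod_{j=1}^n \phi_{t_j}\bigl(\tilde\psi_*\phi^{i_j}(\bullet)\bigr)\;\tilde\psi_*\!\!\prod_{j=1}^n \D_{i_j}.
\]
The key move is to split each factor $\phi_{t_j}(\tilde\psi_*\phi^{i_j}(\bullet))$ as the value of $\phi^{i_j}(t_j)$ at $\tilde\psi(p)$ plus a correction built from $\gamma_{t_j}(\psi)$, the latter encoding the contribution of cutting the root-edge from $t$ to $t_j$. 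Expanding the product produces a sum indexed by subsets $S \subseteq \{1,\ldots,n\}$ that record which edges emanating from the root of $t$ are cut. Applying the inductive hypothesis to the surviving factors $\phi^{i_j}(t_j)$ for $j \notin S$ decomposes each further into a sum over admissible cuts $c_j$ of $t_j$. The combined data $(S, \{c_j\}_{j \notin S})$ bijects with the non-full admissible cuts $c$ of $t$ under the correspondence
\[
P_c(t) = \prod_{j\in S} t_j \cdot \prod_{j\notin S} P_{c_j}(t_j), \qquad R_c(t) = B_+\bigl(\{R_{c_j}(t_j)\}_{j\notin S}\bigr),
\]
and multiplicativity of $t \mapsto \gamma_t(\psi)$ on forests (a consequence of $\delta_{tt'} = \delta_t\delta_{t'}$) together with the $B_+$-factorization of $\phi_{R_c(t)}$ from \eqref{Bphinotation} recombines the expansion into the claimed sum.

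The main obstacle is the combinatorial bookkeeping: verifying that the bijection $(S, \{c_j\}) \leftrightarrow c$ reproduces every non-full admissible cut of $t$ exactly once, including the boundary case $S = \{1,\ldots,n\}$ where all root-edges are cut and $R_c(t) = \bullet$, which must be carefully distinguished from the excluded full cut $R_c(t) = 1$. A secondary technical point is that the splitting of $\tilde\psi_*\phi^{i_j}(\bullet)$ mixes the Christoffel transformation \eqref{Gamma} asymmetrically across the coordinates $i_j \in \{x, z\}$; one has to verify that these asymmetries interact correctly with the pushed-forward derivatives $\tilde\psi_*\prod_j \D_{i_j}$ so that, after contraction, the overall expression depends on $t$ only through the admissible-cut data.
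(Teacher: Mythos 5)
Your plan is essentially the paper's own proof: the paper likewise starts from Lemma \ref{pushforward}, splits each transported factor into a part evaluated along $\tilde\psi$ plus a $\gamma_\bullet(\psi)$-difference via the substitution $(\tilde\psi_*\phi(\bullet)-\phi(\bullet))^{i_j}(\tilde\psi_*\D-\D)_{i_j}=\gamma_\bullet(\psi)Y$, expands over subsets $I$ recording which root edges are cut (equation \eqref{expand}), and then re-expands each uncut child through $\phi_{t_n}(\phi^{i_n}(\bullet)\circ\tilde\psi)=\tilde\psi_*\phi_{t_n}(\phi^{i_n}(\bullet))$ in \eqref{substitute} --- which is exactly your inductive hypothesis, merely unrolled as an iterated substitution rather than packaged as an explicit induction on $|V(t)|$. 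Your bijection between the data $(S,\{c_j\})$ and the non-full admissible cuts, including distinguishing the all-root-edges-cut case $R_c(t)=\bullet$ from the excluded full cut, is the same combinatorial bookkeeping the paper performs, so the two arguments coincide in substance.
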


\begin{proof}
Write $t = B_+(t_1\ldots t_n)$. By Lemma \ref{pushforward} write
\bas \tilde \psi_*\phi_t = \left( \prod_{j=1}^k \phi_{t_j} (\tilde
\psi_* \phi^{i_j}(\bullet)) \prod_{j=1}^k \tilde \psi_*\D_{i_j}\right)
\;. \eas For subsets $I \subseteq \{1,
\ldots k\}$, \ba \tilde \psi_* \phi_t =  \sum_{I} \left( \prod_{j
  \not \in I} \phi_{t_j} (\tilde \psi_* \phi(\bullet) -
\phi(\bullet)|_{\tilde \psi})^{i_j} \prod_{l \in I} \phi_{t_l}(
\phi(\bullet)^{i_l}\circ\tilde \psi) \right)\prod_{j \not \in I}
(\tilde \psi_*\D-\D|_{\tilde \psi})_{i_j}\prod_{l \in I} \D_{i_j}|_{\tilde \psi}
 \label{expand} \;. \ea Making the substitution \bas
(\tilde \psi_* \phi(\bullet) - \phi (\bullet))^{i_j} (\tilde
 \psi_*\D-\D)_{i_j}= \gamma_{\bullet}(\psi) Y\; \eas into any of the
 summands of \eqref{expand} corresponds to making an admissible cut of
 $t$ with pruned forest $\prod_{j\not \in I} t_j$. In the expression
 for $\tilde \psi_* \phi_t$, these appear as $\prod_{j\not \in I}
 \phi_{t_j}(\gamma_{\bullet}(\psi))$. For a fixed set $I$, and a fixed
 $n \in I$, write the tree $t_n$ in \eqref{expand} $t_n =
 B_+(t_{n_1}\ldots t_{n_r})$. For subsets $I' \subseteq \{1 \ldots
 r\}$, use the definition of a pushforward and equation \eqref{expand}
 to write the function \ba \phi_{t_n}( \phi^{i_n}(\bullet)\circ \tilde
 \psi) = \tilde \psi_* \phi_{t_n}( \phi^{i_n}(\bullet)) = \nonumber
 \\ \sum_{I'} \left( \prod_{j\not \in I'} \phi_{t_{n_j}} (\tilde
 \psi_* \phi(\bullet) - \phi(\bullet)|_{\tilde \psi})^{i_j} \prod_{l
   \in I'} \phi_{t_{n_l}} (\phi(\bullet)^{i_l}\circ \tilde \psi)
 \right)\prod_{j\not \in I'} (\tilde \psi_*\D-\D|_{\tilde
   \psi})_{i_j}\prod_{l\in I'} \D_{i_l}|_{\tilde \psi}
 (\phi^{i_n}(\bullet))\;. \label{substitute}\ea Substituting this
 expression into \eqref{expand} corresponds to also taking an admissible
 cut of $t_n$. Therefore, the summands of $\tilde \psi_* \phi_t$, as
 given by equations \eqref{expand} and \eqref{substitute} correspond
 to non-full admissible cuts of $t$. For each such cut, $c$, the
 function associated to the pruned forest by the map $\phi$ is of the
 form \bas \phi_{R_c(t)} \gamma_{\bullet} (\psi) =
 \gamma_{P_c(t)}(\psi) \;,\eas and the root tree is \bas
 \phi_{R_c(t)}|_{\tilde \psi} \;.\eas
\end{proof}

We use this result to calculate the coproduct of $\delta_t$.

\begin{thm}
The coproduct \bas \Delta \delta_t = \sum_{c \textrm{ admis.
    cut}}\delta_{P_c(t)} \otimes \delta_{R_c(t)} \;.\eas
\end{thm}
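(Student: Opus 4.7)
The plan is to determine $\Delta\delta_t$ from the Hopf-algebra defining relation $\delta_t(ab) = \sum \delta_{t_{(1)}}(a)\,\delta_{t_{(2)}}(b)$, by computing $\delta_t$ on a product of two generating monomials $a = f_1 U^*_{\psi_1}$, $b = f_2 U^*_{\psi_2}$ in $\mathcal{A}$ and reading the coproduct off of the result. Multiplication in $\mathcal{A}$ gives $ab = f_1 \cdot (f_2\circ\tilde\psi_1)\,U^*_{\psi_1\psi_2}$, so by Definition \ref{deltatdef},
\begin{align*}
\delta_t(ab) = \gamma_t(\psi_1\psi_2)\,f_1\,(f_2\circ\tilde\psi_1)\,U^*_{\psi_1\psi_2},
\end{align*}
which reduces the whole theorem to a cocycle-like decomposition of the scalar $\gamma_t(\psi_1\psi_2)$ indexed by admissible cuts of $t$.

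To carry out this decomposition I would first establish directly from \eqref{gammadef} the additive cocycle for the single-vertex tree,
\begin{align*}
\gamma_\bullet(\psi_1\psi_2)\big|_p = \gamma_\bullet(\psi_1)\big|_p + \gamma_\bullet(\psi_2)\big|_{\tilde\psi_1(p)},
\end{align*}
and then apply $\phi_t$ to both sides. The first summand yields $\gamma_t(\psi_1)$, which I identify with the full-cut contribution ($R_c(t) = 1$, $\delta_1$ interpreted as the identity). For the second summand $\phi_t\bigl(\gamma_\bullet(\psi_2)\circ\tilde\psi_1\bigr)$, Lemma \ref{pushforward} rewrites it as $(\tilde\psi_{1*}\phi_t)\bigl(\gamma_\bullet(\psi_2)\bigr)$, and Lemma \ref{coprodcontrib} expands the pushforward as a sum over non-full admissible cuts of $t$ with coefficients $\gamma_{P_c(t)}(\psi_1)$ and operators $\phi_{R_c(t)}|_{\tilde\psi_1}$. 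Applying these operators to $\gamma_\bullet(\psi_2)$ produces precisely $\gamma_{R_c(t)}(\psi_2)\big|_{\tilde\psi_1(p)}$, and I obtain
\begin{align*}
\gamma_t(\psi_1\psi_2)\big|_p = \sum_{c\text{ admis. cut}} \gamma_{P_c(t)}(\psi_1)\big|_p\,\gamma_{R_c(t)}(\psi_2)\big|_{\tilde\psi_1(p)}.
\end{align*}
Multiplying through by $f_1(f_2\circ\tilde\psi_1)U^*_{\psi_1\psi_2}$ and regrouping identifies each summand with $\delta_{P_c(t)}(a)\,\delta_{R_c(t)}(b)$, and bilinear extension delivers the claimed coproduct.

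The main obstacle is the bookkeeping of the two degenerate cuts together with the two implicit conventions attached to the empty tree. Lemma \ref{coprodcontrib} drops the full cut outright and silently uses $\gamma_1 \equiv 1$ on the pruned side and $\phi_1 = Y$ on the root side; the missing full-cut term must be reintroduced by hand as the first cocycle summand $\gamma_t(\psi_1)$, while the empty-cut term coming out of the pushforward must be matched with the $1\otimes\delta_t$ piece of the coproduct. A useful sanity check is the specialization to $t = \bullet$, which should recover the Connes--Moscovici primitive coproduct $\Delta\delta_\bullet = \delta_\bullet\otimes 1 + 1\otimes \delta_\bullet$. Careful tracking of base-point evaluations $p$ versus $\tilde\psi_1(p)$ must be maintained throughout, since it is precisely this shift that makes Lemma \ref{coprodcontrib} feed correctly into the coproduct structure on $\mathcal{A}$.
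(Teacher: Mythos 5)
Your proposal is correct and takes essentially the same route as the paper's own proof: evaluate $\delta_t$ on a product of monomials, reduce to the additive cocycle identity for $\gamma_\bullet$, apply the linear map $\phi_t$, and expand the pushforward term $\tilde\psi_{1*}\phi_t(\gamma_\bullet(\psi_2))$ via Lemma \ref{coprodcontrib} as a sum over non-full admissible cuts, with the leftover $\gamma_t(\psi_1)$ supplying the full-cut summand. Your handling of the degenerate cuts and base-point shifts matches the paper's argument exactly.
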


\begin{proof}
By definition \bas \Delta \delta_t (f U_\psi)( g U_\eta) =
\gamma_t(\eta \circ \psi) f\cdot (g\circ \psi) U_{\eta\circ \psi}
\;, \eas and \bas \gamma_t(\eta \circ \psi) = \phi_t
(\gamma_\bullet(\eta \circ \psi)) \;.\eas From the coproduct of
$\delta_\bullet$, \bas \gamma_\bullet (\eta \circ \psi) =
\gamma_\bullet (\psi)+ \gamma_\bullet (\eta) \circ \tilde \psi \;,
\eas so \bas \gamma_t(\eta \circ \psi)= \phi_t (\gamma_\bullet
(\psi)+ \gamma_\bullet(\eta)\circ \tilde \psi ) = \gamma_t(\psi) +
\tilde \psi_*\gamma_t(\eta)\;. \eas Rewriting $\tilde
\psi_*\gamma_t(\eta) = \tilde \psi_* \phi_t (\gamma_\bullet(\eta))$,
Lemma \ref{coprodcontrib} gives \bas \gamma_t(\eta \circ \psi) =
\gamma_t(\psi) + \sum_{\begin{subarray}{c}c \textrm{ admis. cut} \\ c
    \neq \textrm{ full} \end{subarray}}\gamma_{P_c(t)}(\psi)
\gamma_{R_c(t)}(\eta)\circ \tilde \psi \;. \eas Therefore \bas
\delta_t (f U_\psi)( g U_\eta) = \sum_{c \textrm{
    admis. cut}}(\gamma_{P_c(t)}(\psi)f U_\psi)(
\gamma_{R_c(t)}(\eta)g U_\eta) \;.\eas
\end{proof}

We use these linear operators to define a family of vector fields
corresponding to natural growth.

\begin{dfn}
Define a family of vector fields \bas X_t = \phi(t)^i \D_i \;. \eas
These act on $\A$ by $X_t(f U_\psi) = (X_t f)U_\psi$.
\end{dfn}

\begin{rem}Notice that $X_t = \phi_{N_t}$ as defined in equation
\eqref{phinatgrowth}. By equation \eqref{NBrel2}, one can write $X_t =
\phi_{B_+(t)} $. \label{Xt}\end{rem}

\begin{thm}
The coproduct of the vector fields $X_t$ is the same as the coproduct
of the natural growth operator on $\h_{rt}$. Namely, \bas \Delta X_t =
X_t \otimes 1 + \sum_{c \textrm{ admis. cut of } t } \delta_{P_c(t)} \otimes X_{R_c(t)}
\;. \eas As with natural growth, we identify $Y = X_1$.
\end{thm}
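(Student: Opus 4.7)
The plan is to use the general formula \eqref{genvfield} for the action of a vector field on a product, together with the pushforward identity in Lemma \ref{coprodcontrib}, to read off $\Delta X_t$ from the action of $X_t$ on a product of monomials.

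First, apply \eqref{genvfield} to $X_t$ acting on $(f_1 U_{\psi_1}^*)(f_2 U_{\psi_2}^*)$, which gives
\bas X_t\bigl((f_1 U_{\psi_1}^*)(f_2 U_{\psi_2}^*)\bigr) = (X_t f_1) U_{\psi_1}^* \cdot f_2 U_{\psi_2}^* + f_1 U_{\psi_1}^* \cdot (\tilde\psi_{1*} X_t)(f_2)\, U_{\psi_2}^*\;. \eas
The first summand immediately accounts for the $X_t \otimes 1$ contribution to the coproduct, so the substantive work lies in decomposing $\tilde\psi_{1*} X_t$ into operators already defined on $\A$.

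By Remark \ref{Xt} one has $X_t = \phi_{B_+(t)}$, so Lemma \ref{coprodcontrib} expresses $\tilde\psi_{1*} X_t$ as a sum over all non-full admissible cuts of $B_+(t)$. The key combinatorial step is the bijection between non-full admissible cuts $c$ of $B_+(t)$ and all admissible cuts $c'$ of $t$: the empty cut of $B_+(t)$ pairs with the empty cut of $t$; the unique cut severing the edge from the new root to the old root pairs with the full cut of $t$; and the remaining cuts, lying entirely within the edge set of $t$, correspond bijectively to proper admissible cuts of $t$. Under this correspondence $P_c(B_+(t)) = P_{c'}(t)$ and $R_c(B_+(t)) = B_+(R_{c'}(t))$, whence $\phi_{R_c(B_+(t))} = \phi_{B_+(R_{c'}(t))} = X_{R_{c'}(t)}$, using the convention $X_1 = Y$ to cover the case $R_{c'}(t) = 1$ that arises from the full cut of $t$.

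Finally, Definition \ref{deltatdef} identifies multiplication by $\gamma_{P_{c'}(t)}(\psi)$ with the action of $\delta_{P_{c'}(t)}$ on $f_1 U_{\psi_1}^*$. Gathering all terms yields the claimed
\bas \Delta X_t = X_t \otimes 1 + \sum_{c' \textrm{ adm. cut of } t} \delta_{P_{c'}(t)} \otimes X_{R_{c'}(t)}\;. \eas
The main obstacle will be carefully bookkeeping the bijection of cuts, and in particular verifying that the special \emph{cut the root edge} case contributes precisely the $\delta_t \otimes Y$ summand via the identifications $B_+(1) = \bullet$ and $X_1 = Y$; the remaining steps are direct applications of the earlier lemmas.
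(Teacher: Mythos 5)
Your proposal is correct and takes essentially the same route as the paper's own proof: expand $X_t$ on a product via \eqref{genvfield} to isolate the $X_t \otimes 1$ term, write $X_t = \phi_{B_+(t)}$ as in Remark \ref{Xt}, apply Lemma \ref{coprodcontrib} to the pushforward, and convert the sum over non-full admissible cuts of $B_+(t)$ into a sum over all admissible cuts of $t$, with the root-edge cut giving the $\delta_t \otimes Y$ term through the identification $X_1 = Y$. The paper states this cut correspondence more tersely, but your bookkeeping matches it exactly, so there is nothing to add.
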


\begin{proof}
The coproduct of $X_t$ is calculated by evaluating \bas X_t (f
U_\psi)( g U_\eta) = (X_t f)\cdot (g\circ \psi) U_{\eta\psi} + f \cdot
(\tilde \psi_* X_t g) U_{\eta\psi}\;.\eas Write $X_t = \phi_{B_+(t)}$,
as in Remark \ref{Xt}. Lemma \ref{coprodcontrib} gives \bas \tilde
\psi_* X_t = \sum_{\begin{subarray}{c} c \textrm{ admis.  cuts of }
    B_+(t)\\ c \neq \textrm{ full}\end{subarray}}
\gamma_{P_c(B_+(t))}(\psi) \phi_{R_c(B_+(t))}|_{\tilde \psi} \;.\eas

The edge set of $B_+(t)$ is given by $E(B_+(t)) = E(t) \cup e$. A non-full
admissible cut of $B_+(t)$ is either an admissible cut of $t$, or $c =
e$. For any non-full admissible cut of $B_+(t)$, $c$, $R_c(B_+(t)) =
B_+(R_c(t))$. If $c = e$, it is the full cut of $t$, and $R_c(t) = 1$.
Therefore, \bas \psi_* X_t = \sum_{c \textrm{ admis.  cuts}}
\gamma_{P_c(t)} X_{R_c(t)}|_{\tilde \psi} \;. \eas
\end{proof}

This is of the same form as the coproduct of the natural growth operator $N_t$
from Theorem \ref{Ntcoprod}.

\begin{dfn}
Let $\mathfrak{g}_{rt}$ be the Lie algebra spanned by the element $Y$,
$\{X_t | t \textrm{ rooted tree}\}$ \end{dfn}

It remains to check the commutation relations on $\mathfrak{g}_{rt}$.

\begin{thm}
The following commutation relations hold. Let $t$ and $t'$ be rooted
trees. \begin{enumerate} \item $ [Y, X_t] = Y(t) X_t $ \item $[X_t, X_{t'}] = X_{N_t(B_+(t'))} - X_{N_{t'}(B_+(t))} \quad ; \quad [\delta_t,
  \delta_{t'}] = 0 $ . \end{enumerate}

\end{thm}

\begin{proof}
Recall that $x$ and $z$ are independent variables, and that
$\Gamma(x)$ is only a function of $x$. Both $\phi^x(\bullet)$ and
$\phi^z(\bullet)$ have a factor of $e^z$. Therefore, $Y(\phi^i(t)) =
Y(t) \phi^i(t) $. Writing \bas [Y, X_t](f) = Y(\phi^i(t))\D_i(f
U_\psi) = |V(t)| \phi^i(t) \D_i(f U_\psi) = Y(t) X_t (f U_\psi)
\;. \eas To calculate the relation \bas [X_t, X_{t'}] =
X_{N_t(B_+(t'))} - X_{N_{t'}(B_+(t))} \eas use equation \eqref{Bphinotation} to write $
X_{t'} = \phi(B_+(t')$. Then, by Lemma
\ref{natgrowthgen} \bas X_tX_t' = \phi_{N_t(B_+(t'))} = X_{N_t(B_+(t'))} \;. \eas
\end{proof}

Next, we show the left action of $\mathfrak{g}_{rt}$ on $\h_{rt}$.

\begin{thm}
The Lie algebra $\mathfrak{g}_{rt}$ acts on the Hopf algebra $\h_{rt}$ by
\bas X_t( \delta_{t'}) =
\delta_{N_t(t')} \quad ; \quad [Y, \delta_t] = Y(t) \delta_t \eas
\end{thm}

\begin{proof}
The proof of this theorem is similar to that of the previous theorem.  \bas Y(\delta_t) (f U_\psi)= Y(t) \phi_t (\gamma_t(\psi)) = Y(t)
 \delta_t (f U_\psi) \;. \eas To calculate \bas [X_t, \delta_{t'}](f
 U_\psi) = X_t (\phi_{t'}(\gamma_\bullet (\psi))) f U_\psi \;, \eas
 use Lemma \ref{natgrowthgen} to write \bas [X_t, \delta_{t'}](f
 U_\psi) = \phi_{N_t(t')} (\gamma_\bullet (\psi))(f U_\psi) =
 \delta_{N_t(t')} (f U_\psi) \; .\eas
\end{proof}

Finally, it is worth noting that the operators $\delta_t$ can be
written in terms of a series of commutators of operators of the form
$X_{t'}$ with $\delta_\bullet$.

\begin{thm}
The Hopf algebra $\h_{rt}(1)$, defined in \eqref{Hrtdef}, is isomorphic to the Hopf
algebra defined by the bi-crossed product \bas \h_{rt}(1) = {\h_{CK}} \acl
{\mathcal{U}(\mathfrak{g}_{rt})} \;. \eas
\end{thm}

\begin{proof}
This is a corollary of Theorem \ref{NtHopfalg}. Following the
arguments presented there, write $t = B_+(t_1 \ldots t_n)$. Assume by
induction that for all $t'$ with root fertility $<n$, $\delta_{t'}$
can be expressed in terms of linear combinations of commutators of
$\delta_\bullet$ with a set $\{X_{t_i} \}$. Then \bas \delta_t (f
U_\psi) = X_{t_n} (\delta_{B_+(t_1\ldots t_{n-1})})(f U_\psi) -
\sum_{i=1}^{n-1}\delta_{B_+(t_1 \ldots N_{t_n}(t_i) \ldots t_{n-1})}(f
U_\psi) \;. \eas
\end{proof}

The crucial adjustment needed in order to generalize the Hopf
algebra $\h(1)$ is that $\Gamma(x) \neq 0$. If $\Gamma(x) = 0$, then
for any non-trivial rooted tree $t \neq \bullet$, $\phi^i(t) =
0$. Since $\phi^z(\bullet) = -e^z \Gamma(x)$, $\Gamma(x) = 0$ implies
that the only non-zero term of the form $\D_i \phi^j(\bullet)$ is
$\D_z \phi^x(\bullet)$. Therefore, $\Gamma(x) = 0$ implies that for
any $t \neq \bullet$, only $\phi^x(t)$ is possibly non-zero. However,
for $t = B_+(t_1\ldots t_n)$, one sees that \bas \phi^x(t) =
\prod_{i=1}^n \phi^z(t_i) \D_z^n (\phi^x(\bullet)) = 0 . \eas
Therefore, any generalization of this type for the algebras $\h(n)$,
for $n \geq 1$ must incorporate the curvature of the base manifold
$M$, as the commutator $[X_i, X_j] = R^k_{lij}Y^l_k$, for $R^k_{lij}$
the curvature of $M$, and a torsion free connection $\omega^i_j$ on
$M$. Towards this goal, work by \cite{LMK} shows that planar rooted
trees arise naturally in the case of flat connections with constant
torsion.

\section*{Acknowledgments}

The authors would like to thank Karen Yeats for several useful
discussions during the early stages of this project. We are also
grateful to the Caltech SURF program for partial funding for this
project.

\bibliographystyle{amsplain}
%\bibliography{/home/susama/Dropbox/bibliography/Bibliography}{}
\bibliography{Bibliography}{}

\providecommand{\bysame}{\leavevmode\hbox to3em{\hrulefill}\thinspace}
\providecommand{\MR}{\relax\ifhmode\unskip\space\fi MR }
% \MRhref is called by the amsart/book/proc definition of \MR.
\providecommand{\MRhref}[2]{%
  \href{http://www.ams.org/mathscinet-getitem?mr=#1}{#2}
}
\providecommand{\href}[2]{#2}
\begin{thebibliography}{10}

\bibitem{BK06}
C.~Bergbauer and D.~Kreimer, \emph{Hopf algebras in renormalization theory:
  Locality and {Dyson-Schwinger} equations from {H}ochschild cohomology}, IRMA
  Lectures in Mathematics and Theoretical Physics \textbf{10} (2006), 133 --
  164, arXiv:hep-th/0506190v2.

\bibitem{BroadK}
D.~J. Broadhurst and D.~Kreimer, \emph{Association of multiple zeta values with
  positive knots via {F}eynman diagrams up to 9 loops}, Physics letters B
  \textbf{393} (1997), 403--412, arXiv:math-ph/9609128v3.

\bibitem{BrouderRenorm}
C.~Brouder, \emph{{R}unge-{K}utta methods and renormalization}, European
  Physics Journal C \textbf{12} (2000), 521--534, arXiv:hep-th/9904014v1.

\bibitem{Butcherbook}
J.~C. Butcher, \emph{The numerical analysis of ordinary diﬀerential
  equations}, Wiley-Blackwell, 1987.

\bibitem{Cayley}
A.~Cayley, \emph{On the theory of the analytical forms called trees}, American
  Journal of Mathematics \textbf{4} (1881), 266--268.

\bibitem{CK98}
A.~Connes and D.~Kreimer, \emph{{H}opf algebras, renormalization and
  non-commtative geometry}, Communications in Mathematical Physics \textbf{199}
  (1998), 203 -- 242, arXiv:hep-th/9808042v1.

\bibitem{CMos98}
A.~Connes and H.~Moscovici, \emph{{H}opf algebras, cyclic cohomology and the
  transverse index theory}, Communications in Mathematical Physics \textbf{198}
  (1998), 199--246, arXiv:math/9806109v1.

\bibitem{Dur}
A.~Dur, \emph{Mobius functions, incidence algebras and power series
  representations}, Springer, 1986.

\bibitem{K06dse}
D.~Kreimer, \emph{{Dyson-Schwinger} equations: From {Hopf} algebras to number
  theory}, In Universality and Renormalization (Ilia Binder and Dirk Kreimer,
  eds.), Field Institute Communications, vol.~50, American Mathematical
  society, 2006, arXiv:hep-th/0609004v2, pp.~225--248.

\bibitem{KY06}
D.~Kreimer and K.~Yeats, \emph{{An Etude in non-linear Dyson-Schwinger
  Equations}}, Nuclear Physics B. Proceedings Supplements \textbf{160} (2006),
  116--121.

\bibitem{RangipourMoscovici09}
H.~Moscovici and B.~Rangipour, \emph{Hopf algebras of primitive lie
  pseudogroups and hopf cyclic cohomology}, Advances in Mathematics
  \textbf{220} (2009), no.~3, 706 -- 790.

\bibitem{LMK}
H.~Munthe-Kaas and A.~Lundervold, \emph{On post-lie algebras, lie–butcher
  series and moving frames}, Foundations of Computational Mathematics
  \textbf{13} (2013), no.~4, 583--613, arXiv:math-ph/1203.4738.

\bibitem{CMexplain}
R.~Wulkenhaar, \emph{On the {C}onnes {M}oscovici {H}opf algebra associated to
  the diffeomorphisms of a manifold}, arXiv:math-ph/9904009v1.

\end{thebibliography}
\end{document}